\newcommand{\vertiii}[1]{{\left\vert\kern-0.25ex\left\vert\kern-0.25ex\left\vert #1 
		\right\vert\kern-0.25ex\right\vert\kern-0.25ex\right\vert}}
\newtheorem{theorem}{Theorem}
\newtheorem{definition}{Definition}
\newtheorem{lemma}{Lemma}
\newtheorem{remark}{Remark}
\newtheorem{proposition}{Proposition}
\newtheorem{assumption}{Assumption}
\global\long\def\H2{\mathcal{H}_2}
\global\long\def\E1{\mathcal{E}_1}
\global\long\def\mA{\mathcal{A}}
\global\long\def\mAj{\mathcal{A}_j}
\begin{document}

\title{\bf Learning Sparse Dynamical Systems from a Single Sample Trajectory}
\author{Salar Fattahi,  Nikolai Matni, Somayeh Sojoudi
\thanks{Salar Fattahi is with the Department of Industrial Engineering and Operations Research, University of California, Berkeley. Nikolai Matni is with the Department of Electrical Engineering and Computer Sciences, University of California, Berkeley. Somayeh Sojoudi is with the Departments of Electrical Engineering and Computer Sciences and Mechanical Engineering as well as the Tsinghua-Berkeley Shenzhen Institute, University of California, Berkeley. This work was supported by the ONR Award N00014-18-1-2526, NSF Award 1808859 and AFSOR Award FA9550-19-1-0055.}}
\date{}
\maketitle

\begin{abstract}
This paper addresses the problem of identifying sparse linear time-invariant (LTI) systems from a single sample trajectory generated by the system dynamics.  We introduce a Lasso-like estimator for the parameters of the system, taking into account their sparse nature. Assuming that the system is stable, or that it is equipped with an initial stabilizing controller, we provide sharp finite-time guarantees on the accurate recovery of both the sparsity structure and the parameter values of the system. In particular, we show that the proposed estimator can correctly identify the sparsity pattern of the system matrices with high probability, provided that the length of the sample trajectory exceeds a threshold. Furthermore, we show that this threshold scales polynomially in the number of nonzero elements in the system matrices, but logarithmically in the system dimensions --- this improves on existing sample complexity bounds for the sparse system identification problem. We further extend these results to obtain sharp bounds on the $\ell_{\infty}$-norm of the estimation error and show how different properties of the system---such as its stability level and \textit{mutual incoherency}---affect this bound. Finally, an extensive case study on power systems is presented to illustrate the performance of the proposed estimation method.  
\end{abstract}

\section{Introduction}

Modern cyber-physical systems, such as power grids, autonomous transportation systems, and distributed computing and sensing networks, are characterized by being large scale, spatially distributed, and by having complex ever changing dynamics and interconnected topologies.  The distributed optimal control literature addresses set-point tracking and regulation in the distributed setting by assuming known dynamics with a sparse interconnections.  Indeed, the underlying sparsity structure of a distributed system is aggressively (and necessarily) exploited, with foundational results showing that both tractability~\cite{rotkowitz2005characterization} and scalability~\cite{wang2016system, wang2018separable, kheirandishfard2018convex, fattahi2019transformation} in controller synthesis are only possible when the underlying dynamical system is suitably sparse.  However, in this large-scale, dynamic, and complex setting, it is unclear how to obtain the necessary models of the dynamical systems.  To address this issue, we use data-driven approaches to identify both the interconnected topology and the dynamic behavior of these systems for which first-principle modeling becomes either intractable or impractical for such large-scale dynamic systems.

This then raises a more fundamental question: how can data-driven methods be appropriately integrated into safety-critical control loops?   This question has been addressed in the context of learning \cite{sarkar2018fast, pereira2010learning}, and control of a small-scale and dense unknown systems, e.g., a single autonomous vehicle or robot~\cite{dean2018regret, dean2017sample, dean2018safely, abbasi2011regret, faradonbeh2018finite}.  These works make clear that if a learned model is to be integrated into a safety-critical control loop, then it is essential that the uncertainty associated with the learned model be explicitly quantified. This way, the learned model and the uncertainty bounds can be integrated with tools from robust control to provide strong guarantees of system performance and stability.  This paper takes a first step towards extending these results to the large-scale distributed setting by providing a sample efficient and computationally tractable algorithm for the identification of sparse dynamical systems, as well as providing sharp estimates on the corresponding model uncertainty. 

\textbf{Main contributions:} 
We show that large-scale sparse system models can be identified with a complexity scaling quadratically with the number of nonzero elements in the underlying dynamical system---for systems composed of a large number of subsystems that only interact with a small number of local neighbors, this computational saving can be significant.  We further provide sharp bounds on the corresponding model uncertainty, paving the way for the use of these models in safety-critical control loops.  Finally, in contrast to previous work, we show that such models can be extracted from a single trajectory of the system.  In the context of large-scale systems, the system resets needed by methods relying on independent trajectories become prohibitively more expensive and impractical---indeed contrast resetting a robotic arm and a power distribution network, and the increase in difficulty becomes apparent.  Note that we defer a detailed comparison of our results to prior work to Section \ref{sec:main}.

\textbf{Paper organization:} In Section \ref{sec:problem}, we formally define the sparse system-identification task that we consider, and introduce our Lasso-like estimator based on a single system trajectory.  Section \ref{sec:main} presents our main result, and compares and contrasts it with existing results in the literature.  We also show that some of the technical assumptions that we make are necessary for a well-posed problem. We provide an empirical study of our method on a power system in Section \ref{sec:example}.  We end with conclusions in Section \ref{sec:conclusion}. The proofs are deferred to the appendix to streamline the presentation.

\noindent{\bf Notation:} For a matrix $M$, the symbols $\vertiii{M}$, $\vertiii{M}_\infty$, $\|M\|_F$, $\|M\|_1$, and $\|M\|_\infty$ are used to denote its induced spectral, induced infinity, Frobenius, element-wise $\ell_1/\ell_1$, and element-wise $\ell_\infty/\ell_\infty$ norms, respectively. Furthermore, $\|M\|_0$ refers to the number of nonzero elements in $M$. The symbols $M_{:j}$ and $M_{j:}$ indicate the $j^{\text{th}}$ column and row of $M$, respectively. For a set $\mathcal{I}$, the symbol $|\mathcal{I}|$ denotes its cardinality. Given the index sets $\mathcal{U}$ and $\mathcal{V}$, define $M_{\mathcal{U}\mathcal{V}}$ as the $|\mathcal{U}|\times |\mathcal{V}|$ submatrix of $M$ obtained by removing the rows and columns with indices not belonging to $\mathcal{U}$ and $\mathcal{V}$. The symbols $c$ and $c_i$ play the role of universal constants throughout the paper. $\mathbb{E}\left\{x\right\}$ denotes the expected value of a random variable $x$. For an event $\mathcal{E}$, the notation $\mathbb{P}(\mathcal{E})$ refers to its probability of occurrence. The notation $x_n\overset{a.s.}{\rightarrow} x$ means that a sequence of random variables $x_n$ converges to $x$ almost surely.

\section{Problem Statement}\label{sec:problem}

Consider the linear time-invariant (LTI) system
\begin{align}\label{ss}
& x(t+1) = Ax(t) + Bu(t) + w(t)
\end{align}
where $A\in\mathbb{R}^{n\times n}$ and $B\in\mathbb{R}^{n\times m}$ are the unknown state and input matrices, respectively. Furthermore, $x(t)\in\mathbb{R}^n$, $u(t)\in\mathbb{R}^m$, and $w(t)\in\mathbb{R}^n$ are the respective state, input, and disturbance vectors at time $t$.

The goal of this work is to estimate the underlying parameters of the dynamics, based on a limited number of \textit{sample trajectories}, i.e., a sequence $\{(x^{(i)}(\tau),u^{(i)}(\tau))\}_{\tau = 0}^T$ with $i = 1,2,...,d$, where $d$ is the number of available sample trajectories and $T$ is the length of each sample trajectory. To simplify the notations, the superscript $i$ is dropped from the sample trajectories when $d=1$. 

This paper is concerned with the identification of high dimensional but sparse system matrices $(A,B)$.  Such high-dimensional sparse parameters arise in the context of large-scale distributed and multi-agent systems, where dynamic coupling arises due to local interactions between subsystems--it is this local interaction structure that results in correspondingly sparse system matrices.  Examples of such systems include power grids, intelligent transportation systems, and distributed computation and sensing networks.

We now compare and contrast two approaches to collecting sample trajectories from a dynamical system~\eqref{ss}:

\vspace{2mm}
\noindent\textbf{Fixed $\mathbf{d}$ and variable $\mathbf{T}$:} In this method, the number of sample trajectories $d$ is set to a fixed value (e.g., $d = 1$) and instead, a sufficiently long time horizon (also referred to as learning time) $T$ is chosen to collect enough information about the dynamics. This approach is most suitable when the open-loop system is stable, or if a stabilizing controller is provided---note that this assumption of stability is necessary, as even a simple least-squares estimator may not be consistent if the system has unstable modes~\cite{sarkar2018fast}. From a practical perspective, system instability may also impose limits on how large the learning time can be in order to ensure system safety, thereby restricting the amount of data that can be collected.

\vspace{2mm}
\noindent\textbf{Fixed $\mathbf{T}$ and variable $\mathbf{d}$:} In this approach, the learning time $T$ is fixed and instead, the number of sample trajectories is chosen to be sufficiently large. Notice that this method is not dependent on the system stability. However, one needs to reset the initial state of the system at the beginning of each sample trajectory, which may not be possible in practice, especially in the case of large-scale systems.
\vspace{2mm}

This work focuses on \textit{sparse} system identification using a single trajectory, where it is assumed that the system is either stable, or equipped with an initial stabilizing controller, and our goal is to both identify the supports of the sparse system matrices $(A,B)$ and estimate their values, using a single sample trajectory. As mentioned in~\cite{dean2018regret}, in many applications, the existence of an initial stabilizing controller for the unknown system~\eqref{ss} is not restrictive. In fact,~\cite{dean2017sample} and~\cite{faradonbeh2018finite} respectively introduce offline and adaptive procedures for designing such an initial stabilizing controller.


Indeed, one can cast the sparse system identification task as a \textit{supervised learning} problem, where the goal is to fit the linear model~\eqref{ss}---parameterized by $(A,B)$---to a limited number of measurements $\{(x(\tau),u(\tau))\}_{\tau = 0}^T$. Motivated by this observation, one can consider the following $M$-estimator:
\begin{align}\label{ls_single}
(\hat{A},\hat{B}) = \arg\min_{A, B}\ &\frac{1}{2T}\sum_{t=0}^{T-1} \left\|x(t+1)-\left(Ax(t)+Bu(t)\right)\right\|_2^2\nonumber\\
&+ \lambda(\|A\|_1+\|B\|_1).
\end{align}
where the first term corresponds to the maximum likelihood estimation of $(A,B)$ when the disturbance noise has a zero-mean Gaussian distribution, and the second term has the role of promoting sparsity in the estimated $(\hat A,\hat B)$. 

Before proceeding, it is essential to note that there are fundamental limits on the performance of the introduced estimator. In particular, the above optimization problem may not have a unique solution for any length of the sample trajectory. To see this, suppose that $u(t) = K_0x(t)$ and $K_0$ is equal to the identity matrix. Then, the above optimization problem reduces to
\begin{align}
(\hat{A},\hat{B}) = \arg\min_{A, B}&\frac{1}{2T}\sum_{t=0}^{T-1} \left\|x(t+1)-\left(A+B\right)x(t)\right\|_2^2\nonumber\\
&+ \lambda(\|A\|_1+\|B\|_1).\nonumber
\end{align}
It is easy to see that, given any optimal solution $(\hat{A},\hat{B})$ to the above optimization, $(\tilde{A}, \tilde B) = (\alpha\hat{A},(1-\alpha)\hat{B})$ is also optimal for any $0\leq\alpha\leq 1$. To break this symmetry and to guarantee the identifiability of the parameters, it is essential to inject an \textit{input noise} to the system at every time $t$. In particular, we assume that $u(t) = K_0x(t)+v(t)$, where $v(t)$ is a random vector with a user-defined distribution. As another example, if $A$ is stable and $K_0 = 0$, the need to introduce noise in the input is inevitable in order to identify the matrix $B$.

To further analyze the properties of the above estimator, one can write~\eqref{ss} in a compact form. Let $\Psi^* = \begin{bmatrix}
A & B
\end{bmatrix}^\top$ denote the true parameters of the system. Furthermore, define
\begin{align}
& Y \!=\! \begin{bmatrix}
x(1)^\top\\
\vdots\\
x(T)^\top
\end{bmatrix}, X \!=\! \begin{bmatrix}
x(0)^\top & u(0)^\top\\
\vdots & \vdots\\
x(T\!-\!1)^\top & u(T\!-\!1)^\top
\end{bmatrix},
W \!=\! \begin{bmatrix}
w(0)^\top\\
\vdots\\
w(T\!-\!1)^\top
\end{bmatrix}.
\end{align}
The system identification problem is then reduced to estimating the unknown parameter $\Psi^*$ given the \textit{design matrix} $X$, and the \textit{observation matrix} $Y$ that is corrupted with the \textit{noise matrix} $W$. We can therefore rewrite optimization problem~\eqref{ls_single} compactly as
\begin{equation}\label{ls2_single}
\hat{\Psi} = \arg\min_{\Psi}\frac{1}{2T}\|Y-X\Psi\|_F^2+\lambda\|\Psi\|_1
\end{equation} 
which corresponds to the so-called \textit{Lasso} estimator, initially popularized in statistics and machine learning to estimate the support parameter values of a sparse linear model~\cite{tibshirani1996regression}. The non-asymptotic properties of this estimator have been widely studied in the literature~\cite{wainwright2009sharp, meinshausen2006high, zhao2006model}, all highlighting its sub-linear sample complexity under suitable technical conditions. In particular, they show that under the so-called \textit{mutual incoherency} of the design matrix and the sparsity of the unknown parameters, the minimum number of observations for the accurate estimation of the Lasso scales logarithmically in the dimension of $\Psi$. Motivated by these results, one may speculate that the proposed estimator~\eqref{ls_single} benefits from a similar logarithmic sample complexity. However, the validity of the derived non-asymptotic estimation error bounds on the Lasso is contingent upon a number of assumptions on the independence between the design matrix $X$ and the noise matrix $W$~\cite{wainwright2009sharp, negahban2012unified}; such assumptions do not necessarily hold in the sparse system identification problem, partly due to the dependency between the states, the inputs and the disturbance noise. The problematic nature of this dependency becomes more evident by noting that the Lasso may not be consistent when the design and noise matrices are dependent~\cite{fan2014endogeneity}.

%

This lack of independence in the design and noise matrices of the sparse system identification problem has been the main roadblock in deriving similar sub-linear sample complexity bounds for the sparse system identification problem and it leaves the following question unanswered:

\vspace{2mm}
{\it Is the estimator~\eqref{ls_single} consistent, and if so, what is its sample complexity?}

\section{Main Results}\label{sec:main}

Despite the fact that in general, the Lasso may not be a consistent estimator when the design and noise matrices are dependent, we exploit the underlying structure of the system identification problem to control this dependency and provide an affirmative answer to the posed question. In other words, we show that not only is the proposed estimator~\eqref{ls_single} consistent, but that it also enjoys a logarithmic sample complexity in the state and input dimensions, under appropriate conditions. To this goal, we first provide a number of definitions.
\begin{definition}
	A zero-mean (centered) random variable $x$ is \textbf{sub-Gaussian} with parameter $b$ if its moment generating function satisfies
	\begin{align}\nonumber
	\mathbb{E}\{\exp(tx)\}\leq\exp\left(\frac{b^2t^2}{2}\right)
	\end{align}
	for every $t$.
\end{definition}
For a centered sub-Gaussian random variable $x$ with parameter $b$, one can easily verify that $\mathbb{P}(|x|>t)\leq2\exp\left(\frac{t^2}{2b^2}\right)$. 
The most commonly known examples of such random variables are Gaussian, Bernoulli, and any bounded random variable.
\begin{definition}
	Given a sub-Gaussian random variable $x$, its \textbf{sub-Gaussian norm}, denoted by $\|x\|_{\psi}$ is defined as the smallest $r>0$ such that the inequality $\mathbb{E}\{{x^2}/{r^2}\}\leq 2$ is satisfied.
\end{definition}
It is well-known that the above two definitions are closely related. In particular, it can be verified that $\frac{1}{\sqrt{5}}b\leq \|x\|_{\psi}\leq\sqrt{\frac{8}{3}}b$ for a sub-Gaussian random variable with parameter $b$.\footnote{This is a standard result; see~\cite{rivasplata2012subgaussian} and~\cite{wainwright2019high} for a simple proof.} For a random vector $x$ with sub-Gaussian elements, $\|x\|_\psi$ is defined as $\max_i\{\|x_i\|_\psi\}$.

As mentioned before, we assume that the dynamical system is equipped with an initial static and stabilizing state-feedback controller $K_0$. More specifically, we assume that at any given time $t$, the input $u(t)$ is equal to $K_0x(t)+v(t)$, where $v(t)$ is a user-defined {input noise} with independent and centered sub-Gaussian elements whose non-zero variance is upper bounded by $\sigma_v^2$ (for stable systems, $K_0$ can be set to zero). Similarly, we assume that the disturbance noise at every time $t$ is a random vector with independent and centered sub-Gaussian elements whose variance is upper bounded by $\sigma_u^2$. Further, let $\eta>0$ be the smallest positive constant such that $\max\{\|w(t)\|_\psi,\|v(t)\|_\psi\}\leq\eta$; such a constant is guaranteed to exist as $w$ and $v$ are assumed to be centered sub-Gaussian random variables.
\begin{remark}
	Most of the existing results on the sample complexity of the system identification problem assume a centered Gaussian distribution for the input noise~\cite{pereira2010learning, Salar18, dean2017sample}. Despite having desirable finite-time properties, these types of Gaussian inputs may jeopardize the safety of the dynamical system due to their unbounded range. Accordingly, in many control systems, the input is constrained to have a limited power. These types of constraints can be translated into $\ell_{\infty}$ or $\ell_2$ bounds on the input signal. Due to the fact that such bounded random signals are sub-Gaussian, our results are readily applied to system identification problems with input constraints.
\end{remark}
Notice that for LTI systems, the uniform asymptotic stability of the closed-loop system is equivalent to its exponential stability. In other words, an LTI system is uniformly asymptotically stable if and only if there exist constants $C\geq1$ and $0<\rho<1$ such that $\vertiii{(A+BK_0)^\tau}\leq C\rho^\tau$ for every time $\tau$. Without loss of generality, let $C\geq 1$ and $0\leq \rho <1$ be the smallest constants such that $\vertiii{(A+BK_0)^\tau B}\leq C\rho^\tau$, $\vertiii{K_0(A+BK_0)^\tau}\leq C\rho^\tau$ and $\vertiii{K_0(A+BK_0)^\tau B}\leq C\rho^\tau$ for every time $\tau$. Note that the existence of such $C\geq 1$ and $0<\rho<1$ is guaranteed due to the exponential stability of the closed-loop system.

Furthermore, we assume that the initial state $x(0)$ rests at its stationary distribution or, equivalently, the following equality holds:
\begin{align}\nonumber
x(0)=\lim_{\tilde{T}\rightarrow\infty}\sum_{\tau=-\tilde{T}}^{-1}(A+BK_0)^{-\tau-1}(w(\tau)+Bv(\tau))
\end{align}
Note that, for exponentially stable systems, the state converges to its stationary distribution exponentially fast and therefore, the stationarity of $x(0)$ is a reasonable assumption. Furthermore, using the above equality, it is easy to see that $x(0)$ is a random vector whose elements are (dependent) centered sub-Gaussian random variables with bounded parameters. Moreover, one can verify that its covariance $\mathbb{E}\{x(0)x(0)^\top\} = Q^*$ satisfies the following Lyapunov equation:
\begin{align}
(A+BK_0)Q^*(A+BK_0)^\top-Q^*+\sigma^2_wI + \sigma_v^2BB^\top = 0
\end{align}
\begin{sloppypar}
	Accordingly, $Q^*$ can be used to derive the covariance matrix $M^*$ for the random vector $\begin{bmatrix}
	x(0)^\top & (K_0x(0)+v(0))^\top
	\end{bmatrix}^\top$:
	\begin{equation}\nonumber
	{M}^* = \begin{bmatrix}
	Q^* & Q^*K_0^T\\
	K_0Q^* & K_0Q^*K_0^T + \sigma^2_v I 
	\end{bmatrix}
	\end{equation}
	Define $\mathcal{A}_j = \{i: \Psi^*_{ij} \not = 0\}$ and let $\mathcal{A}^c_j$ refer to its complement. Denote $k$ as the maximum number of nonzero elements in any column of ${\Psi^*}$. 
\end{sloppypar}

\begin{assumption}
	The following inequalities are satisfied
	\begin{itemize}
		\item[A1] (Mutual incoherence)
		\begin{align}\nonumber
		\max_{1\leq j\leq n}\left\{\max_{i\in\mathcal{A}^c_j}\left\{\left\|{{M}^*_{i \mathcal{A}_j}(M^*_{\mathcal{A}_j \mathcal{A}_j})^{-1}}\right\|_{1}\right\}\right\}\leq 1-\gamma
		\end{align}
		\item[A2] (Bounded eigenvalue)
		\begin{align}\nonumber
		\min_{1\leq j\leq n}\lambda_{\min}(M^*_{\mathcal{A}_j \mathcal{A}_j})\geq C_{\min}
		\end{align}
		\item[A3] (Bounded infinity norm)
		\begin{align}\nonumber
		\max_{1\leq j\leq n}\vertiii{(M^*_{\mathcal{A}_j \mathcal{A}_j})^{-1}}_\infty\leq D_{\max}
		\end{align}
		\item[A4] (Nonzero gap)
		\begin{align}\nonumber
		\min_{1\leq j\leq n}\left\{\max_{i\in\mathcal{A}_j}\left\{|\Psi^*_{ij}|\right\}\right\} \geq \Psi_{\min}
		\end{align}
	\end{itemize}
	for some constants $0<\gamma<1$, $1\geq C_{\min}>0$, $D_{\max}\geq1$ and $1\geq \Psi_{\min}>0$.
\end{assumption}
Next, we present the main result of the paper.
\begin{theorem}\label{thm1_single}
	Assume that $k\geq 2$ and
	\begin{align}
	&\lambda = c_1\cdot\frac{C}{1-\rho}\cdot\frac{\eta^2}{\gamma}\sqrt{\frac{\log((n+m)/\delta)}{T}}\label{lower_lambda}\\
	& T\geq c_2\cdot\frac{C^4}{(1-\rho)^4}\cdot\frac{D^2_{\max}}{\gamma^2C^2_{\min}\Psi^2_{\min}}\cdot k^2\log((n+m)/\delta),\label{lower_T}
	\end{align}
	where $c_1$ and $c_2$ are universal constants. Then, the following statements hold with probability of at least $1-\delta$:
	\begin{itemize}
		\item[1.](Correct sparsity recovery)~\eqref{ls2_single} has a unique solution and recovers the true sparsity pattern of $\Psi^*$.
		\item[2.] ($\ell_{\infty}$-norm error) We have
		\begin{equation}\label{est_err}
		\|\hat\Psi-\Psi^*\|_{\infty}\leq c_3\cdot\frac{C}{1-\rho}\cdot\frac{D_{\max}\eta^2}{\gamma}\sqrt{\frac{\log((n+m)/\delta)}{T}}
		\end{equation}
		where $c_3$ is a universal constant.
	\end{itemize}
\end{theorem}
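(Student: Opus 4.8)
The plan is to prove both statements simultaneously via the \emph{primal-dual witness} (PDW) construction, exploiting the fact that the estimator~\eqref{ls2_single} decouples across columns. Since $\frac{1}{2T}\|Y-X\Psi\|_F^2=\frac{1}{2T}\sum_{j=1}^{n}\|Y_{:j}-X\Psi_{:j}\|_2^2$ and $\|\Psi\|_1=\sum_{j=1}^n\|\Psi_{:j}\|_1$, the optimization separates into $n$ independent Lasso problems, one for each column $j$, with response $Y_{:j}$, design $X$, true parameter $\Psi^*_{:j}$ supported on $\mathcal{A}_j$, and additive noise $W_{:j}$. Fixing $j$, I would run the PDW: solve the support-restricted (oracle) Lasso to obtain $\hat\Psi_{\mathcal{A}_j j}$, set $\hat\Psi_{\mathcal{A}_j^c j}=0$, take the on-support dual variable $\hat z_{\mathcal{A}_j}=\operatorname{sign}(\hat\Psi_{\mathcal{A}_j j})$, and solve the stationarity (KKT) condition for $\hat z_{\mathcal{A}_j^c}$. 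The standard PDW lemma then certifies that this candidate is the \emph{unique} global optimum with no false positives provided the strict dual feasibility condition $\|\hat z_{\mathcal{A}_j^c}\|_\infty<1$ holds and the oracle solution carries the correct signs.

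Writing $\hat\Sigma=\frac{1}{T}X^\top X$ and substituting $Y_{:j}=X\Psi^*_{:j}+W_{:j}$ into the KKT system, the off-support dual variable decomposes as
\begin{equation}\nonumber
\hat z_{\mathcal{A}_j^c}=\hat\Sigma_{\mathcal{A}_j^c \mathcal{A}_j}(\hat\Sigma_{\mathcal{A}_j \mathcal{A}_j})^{-1}\hat z_{\mathcal{A}_j}+\frac{1}{\lambda T}\left[(X^\top W_{:j})_{\mathcal{A}_j^c}-\hat\Sigma_{\mathcal{A}_j^c \mathcal{A}_j}(\hat\Sigma_{\mathcal{A}_j \mathcal{A}_j})^{-1}(X^\top W_{:j})_{\mathcal{A}_j}\right].
\end{equation}
The first summand is a \emph{sample} mutual-incoherence term; the second is a \emph{noise} term. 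Since the stationary covariance of the augmented row $[x(t)^\top\,u(t)^\top]^\top$ equals $M^*$, the two quantities I must control are (i) the elementwise deviation $\|\hat\Sigma-M^*\|_\infty$, so that the sample incoherence inherits the margin $\gamma$ of A1 and the block $\hat\Sigma_{\mathcal{A}_j\mathcal{A}_j}$ inherits the invertibility and conditioning guaranteed by A2–A3, and (ii) the deviation $\|\tfrac{1}{T}X^\top W_{:j}\|_\infty$. Converting the elementwise bound on (i) to the induced-$\infty$ (row-sum) norm over supports of size at most $k$ costs factors of $k$, which is the source of the $k^2$ dependence in the threshold~\eqref{lower_T}.

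The main obstacle—and the essential departure from the classical Lasso analysis—is establishing these two concentration bounds \emph{despite the dependence between $X$ and $W$}: each row $x(t)^\top$ is a function of the past noises $w(0),\dots,w(t-1)$ and input noises. To handle this I would unroll the dynamics, $x(t)=\sum_{\tau}(A+BK_0)^{t-1-\tau}\big(w(\tau)+Bv(\tau)\big)$ with the sum extended to the stationary past, and use the exponential-stability decay $\vertiii{(A+BK_0)^\tau}\le C\rho^\tau$ to bound the system's memory; summing the resulting geometric series produces the recurring factor $\tfrac{C}{1-\rho}$ seen in~\eqref{lower_lambda}–\eqref{est_err}. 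For the cross term, the key observation is that $w(t)$ is independent of the row $[x(t)^\top\,u(t)^\top]$, which depends only on noise through time $t-1$; hence the partial sums of $\tfrac{1}{T}X^\top W_{:j}$ form a martingale-difference sequence and concentrate at rate $\tfrac{C}{1-\rho}\eta^2\sqrt{\log((n+m)/\delta)/T}$ via a sub-Gaussian (self-normalized) martingale inequality. For $\hat\Sigma$ the rows are dependent but exponentially mixing, and I expect to isolate an elementwise sub-exponential concentration bound of the same order as a separate lemma; these two concentration results are the technically hardest part of the argument.

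On the event where both deviations are small—which, given the choices of $\lambda$ in~\eqref{lower_lambda} and $T$ in~\eqref{lower_T}, holds with probability at least $1-\delta$ after a union bound over the $n$ columns (absorbed into the $\log((n+m)/\delta)$ factor)—strict dual feasibility $\|\hat z_{\mathcal{A}_j^c}\|_\infty\le 1-\gamma/2<1$ follows, yielding uniqueness and $\hat\Psi_{\mathcal{A}_j^c j}=0$. For the $\ell_\infty$ error I would return to the on-support identity $\hat\Delta_{\mathcal{A}_j}=(\hat\Sigma_{\mathcal{A}_j \mathcal{A}_j})^{-1}\big[\tfrac{1}{T}(X^\top W_{:j})_{\mathcal{A}_j}-\lambda\hat z_{\mathcal{A}_j}\big]$ and bound $\|\hat\Delta_{\mathcal{A}_j}\|_\infty$ using A3 ($\vertiii{(M^*_{\mathcal{A}_j\mathcal{A}_j})^{-1}}_\infty\le D_{\max}$), the controlled noise term, and $\lambda$, obtaining the claimed rate $\tfrac{C}{1-\rho}\tfrac{D_{\max}\eta^2}{\gamma}\sqrt{\log((n+m)/\delta)/T}$. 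Finally, comparing this error against the gap A4 ($\Psi_{\min}$)—which is precisely why $T$ in~\eqref{lower_T} scales as $1/\Psi_{\min}^2$—ensures every true nonzero entry remains nonzero with the correct sign, thereby ruling out false negatives and completing correct support recovery.
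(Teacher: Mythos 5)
Your high-level architecture coincides with the paper's: the paper likewise decomposes \eqref{ls2_single} into $n$ column-wise Lasso problems \eqref{lsj}, invokes a primal-dual-witness-type sufficient condition (Lemma~\ref{l_sparse}, imported from \cite{pereira2010learning}) whose hypotheses are exactly your two targets --- a bound on the noise correlation $G=\frac{1}{T}X^\top W_{:,j}$ and a bound on the deviation of the sample Gram matrix $M=\frac{1}{T}X^\top X$ (your $\hat\Sigma$) from $M^*$ --- and then combines a deterministic KKT-based error bound (Lemma~\ref{detbound_app}) with union bounds over columns to reach \eqref{est_err}. You also correctly locate the sources of the $k^2$, $1/\Psi_{\min}^2$, and $C/(1-\rho)$ factors, and the unrolling of the dynamics into the stationary past is indeed how the paper proceeds.

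The genuine gap is that the two concentration lemmas, which you yourself flag as ``the technically hardest part,'' are asserted rather than proven, and they are precisely where the difficulty of this problem lives: the dependence between $X$ and $W$ is the stated obstruction that separates this problem from the classical Lasso setting. The paper's mechanism is to write every entry of $G$ and of $M$ as a quadratic form $z^\top R z$ in a single stacked noise vector $z$ and apply the Hanson--Wright inequality; the factors $C/(1-\rho)$ and $C^2/(1-\rho)^2$ then arise from explicit spectral, Frobenius, and rank bounds on the structured block-Toeplitz coefficient matrices (Lemmas~\ref{l_bound_G}, \ref{l_bound_G2}, \ref{l_bound_M}), yielding Lemmas~\ref{l1_app} and~\ref{l2_app}. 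Your proposed substitutes contain a specific unresolved issue: a conditionally sub-Gaussian (or self-normalized) martingale bound for $\sum_t x_i(t)w_j(t)$ controls the tail only in terms of the quadratic variation $\sum_t x_i(t)^2$, i.e., in terms of the diagonal of $\hat\Sigma$, so your bound (ii) presupposes (at least the diagonal part of) your bound (i); and bound (i) --- elementwise concentration of $\hat\Sigma$ around $M^*$ for the dependent, mixing state sequence --- is exactly the statement you defer with ``I expect to isolate \dots as a separate lemma.'' The dependency is sequential rather than circular, so the plan is repairable, but until that lemma is actually established (by a blocking/mixing Bernstein argument, or by the paper's Hanson--Wright route) the proof is incomplete at its core; everything downstream of the two concentration bounds is standard and you executed it correctly.
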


\begin{remark}
	As mentioned before, the injection of a random input noise is essential to guarantee the identifiability of the parameters. This is also reflected in the above theorem: in order to guarantee a finite sample complexity for the proposed estimator, it is crucial to have $C_{\min}>0$, which is only possible if $\sigma_v>0$.
\end{remark}

A number of observations can be made based on Theorem~\ref{thm1_single}. First, it implies that if $\gamma$, $C$, $D_{\max}$, $C_{\min}$, $\Psi_{\min}$, and $\rho$ do not scale with the system dimension, then $T = \Omega(k^2\log(n+m))$ is enough to guarantee the correct sparsity recovery and a small estimation error. Notice that for sparse systems, this quantity can be much smaller than the system dimension. Second, the sample complexity of the proposed estimator depends on $\frac{C}{1-\rho}$, which is a measure of the system stability. In particular, for highly stable systems, $\frac{C}{1-\rho}$ is small, resulting in an improved accuracy of the proposed estimator with smaller $T$. In contrast, when the system is close to its stability margin, $\frac{C}{1-\rho}$ will grow which negatively affects the estimation error as well as the lower bound on $T$. Another intuitive interpretation of $\frac{C}{1-\rho}$ is that it measures the amount of \textit{dependency} between the states at different times: for highly stable systems where $\rho$ is small, $(x(t),u(t))$ is only weakly dependent on $(x(\tau),u(\tau))$ for $\tau = 0,\dots,t-1$, thereby facilitating the estimation of the unknown parameters. 
We finally mention that this dependency is in contrast with the recent discoveries on the sample complexity of the least-squares estimator, which support the favorable effect of a large $\rho$ on the accuracy of the estimator~\cite{simchowitz2018learning}. We leave investigating whether this seemingly contradictory observation is an artifact of our methodology (e.g., mixing the initial state to the stationary distribution), or is fundamental to the sparse system identification problem, to future work.
\begin{remark}
	In order to further enhance the accuracy of the proposed estimator, one can perform a least-squares estimation restricted to the nonzero elements of the estimated parameter, after obtaining its sparsity pattern via the proposed method. Although, theoretically, this post-model-selection estimation method may not improve the estimation error rate, it will incur less bias~\cite{belloni2013least}. We will show in our simulations that the effect of this post-processing step can be significant in the accuracy of the estimation.
\end{remark}
\subsection{Comparison to prior art}

As mentioned before, another line of work focuses on unstructured system identification, where either the learning time $T$ or the number of sample trajectories $d$ is allowed to grow. In~\cite{dean2017sample}, the authors consider the sample complexity of the system identification problem with multiple sample trajectories via least-squares, where it is shown that the proposed estimator incurs a small error, provided that $d = \Omega(n+m)$. Revisiting~\eqref{lower_T} reveals that the proposed method outperforms the sample complexity of ordinary least-squares when $k$ is significantly smaller than $n+m$, i.e., exploiting prior knowledge of the system sparsity leads to a reduction in sample complexity. In~\cite{sarkar2018fast, simchowitz2018learning, abbasi2011regret, faradonbeh2018finite}, the authors consider unstructured system identification from a single sample trajectory under different assumptions on system stability and/or the initial state of the system. However, similar to~\cite{dean2017sample}, none of these works take advantage of the underlying sparsity structures of the system matrices. As a result, they cannot correctly estimate the sparsity structure of $(A,B)$ and suffer from poor dependencies on the system dimensions in the large-scale and structure setting.

Subsequently, a Lasso-type estimator is proposed in~\cite{Salar18} to further exploit the underlying sparsity pattern of $(A,B)$ with $d$ sample trajectories, each with a zero initial state. In particular, it is shown that $d = \Omega\left(\frac{\kappa({\Sigma})^2}{\gamma^2\Psi_{\min}^2}k\log(n+m)\right)$ is enough to ensure the correct sparsity recovery and a small estimation error with high probability, where $\kappa({\Sigma})$ is the condition number of the finite-time \textit{controllability matrix} of the system. Comparing this quantity with~\eqref{lower_T}, one can observe that the former has a better dependency on $k$. However, $\kappa({\Sigma})$ is highly dependent on the learning time $T$. In fact, it is easy to show that for unstable systems, $\kappa({\Sigma})$ may grow exponentially fast with respect to $T$. On the other hand,~\eqref{lower_T} is free of such dependency and instead, it is in terms of the stationary distributions of the state and input vectors.

Moreover, our work is a major extension to the results of~\cite{pereira2010learning}, where the authors address a similar sparse system identification problem with a single sample trajectory. First, unlike the presented results,~\cite{pereira2010learning} only considers autonomous systems, i.e., systems~\eqref{ss} with $B$=0. Second,~\cite{pereira2010learning} only ensures the correct sparsity recovery of the true parameters. In contrast, we extend these results to obtain non-asymptotic bounds on the estimation error. As demonstrated in~\cite{dean2017sample, dean2018regret}, having these bounds is essential for the design of near-optimal and robustly stabilizing controllers. Third,~\cite{pereira2010learning} requires that the closed-loop system be contractive with respect to the spectral norm, i.e., that $|||(A+BK_0)|||<1$, whereas we only require system stability. Notice that the former condition is much stronger, as in practice, stable systems are often not contractive in spectral norm. Finally, the validity of the non-asymptotic bounds introduced in~\cite{pereira2010learning} heavily relies on the Gaussian nature of the disturbance and input noises. As an extension to this result, our proposed method targets a larger class of uncertainties for the disturbance and input noises, thereby allowing for norm bounded disturbance and input signals.

\subsection{Mutual incoherency}

In this subsection, we analyze the mutual incoherence condition on the steady-state covariance matrix $M^*$. In particular, we explain why this assumption is not an artifact of the proposed method, but that it rather stems from a fundamental limitation of \textit{any} sparsity-promoting technique for the system identification problem. We show that similar mutual incoherence assumptions are indeed necessary to recover the correct sparsity of system parameters by using a class of \textit{oracle estimators}.

We assume that the oracle estimator can measure the disturbance matrix $W$ and that it can work with sample trajectories of an arbitrary length.
With these assumptions, the oracle estimator solves the following optimization problem to estimate the parameters of the system:
\begin{subequations}\label{opt}
	\begin{align}
	\min_{\Psi}\ \ & \|\Psi\|_0\\
	\mathrm{s.t.}\ \ & X\Psi = Y-W\label{opt2}
	\end{align}
\end{subequations}
Clearly, this oracle estimator cannot be used in practice since 1) the disturbance matrix $W$ is unknown, 2) the learning time $T$ is finite, and 3) the corresponding optimization problem is non-convex and NP-hard in its worst case. Setting aside these restrictions for now, there are fundamental limits on the consistency of this estimator. To explain this, we introduce the mutual-coherence metric for a matrix (note the difference between this definition and Assumption A1). For a given matrix $A\in\mathbb{R}^{t_1\times t_2}$, its mutual-coherence $\mu(A)$ is defined as
\begin{align}\nonumber
\mu(A) = \max_{1\leq i<j\leq t_2}\frac{|A_{:,i}^\top A_{:,j}|}{\|A_{:,i}\|_2\|A_{:,j}\|_2}
\end{align}
In other words, $\mu(A)$ measures the maximum correlation between distinct columns of $A$. Reminiscent of the classical results in the compressive sensing literature, it is well-known that the optimal solution $\Psi^*$ of~\eqref{opt} is unique if the following \textit{identifiability} condition
\begin{align}\label{cond1}
\|\Psi^*_{:,j}\|_0<\frac{1}{2}\left(1+\frac{1}{\mu(X)}\right)
\end{align} 
holds for $j = 1,2,...,n$ (see, e.g., Theorem 2.5 in~\cite{elad2010sparse}). Furthermore, this bound is tight, implying that there exists an instance of the problem for which the violation of $\|\Psi^*_{:,j}\|_0<\frac{1}{2}\left(1+\frac{1}{\mu(X)}\right)$ for some $j$ results in the non-uniqueness of the optimal solution. On the other hand, according to Lemma~\ref{l2_app} (to be introduced later) and the Borel-Cantelli lemma, $\frac{1}{T}X^\top X$ converges to $M^*$ almost surely, as $T\rightarrow \infty$. This implies that
\begin{align}\nonumber
\mu(X) &= \max_{1\leq i<j\leq m+n}\frac{|X_{:,i}^\top X_{:,j}|}{\|X_{:,i}\|_2\|X_{:,j}\|_2} \overset{a.s.}\rightarrow \max_{1\leq i<j\leq m+n}\frac{|M^*_{ij}|}{\sqrt{M^*_{ii}M^*_{jj}}}
\end{align}
The above analysis reveals that the off-diagonal entries of $M^*$ play a crucial role in the identifiability of the true parameters: as these elements become smaller relative to the diagonal entries, the oracle estimator can correctly identify the structure of $\Psi$ for a wider range of sparsity levels. 
Similarly, our proposed mutual incoherence assumption is expected to be satisfied when the off-diagonals of $M^*$ have small magnitudes, relative to the diagonal entries. This implies that Assumption A1 is a natural condition to impose in order to ensure the correct sparsity recovery of $\Psi$. Furthermore, in practice, $M^*$ will be close to a diagonally dominant matrix with exponentially decaying off-diagonal entries, provided that the matrices $A$, $B$, and $K_0$ have sparse structures~\cite{simoncini2015lyapunov}. 

\section{Numerical Experiments}
\label{sec:example}

\begin{figure*}
	\centering
	\subfloat[Relative mismatch error]{\label{fig_ME}
		\includegraphics[width=.33\columnwidth]{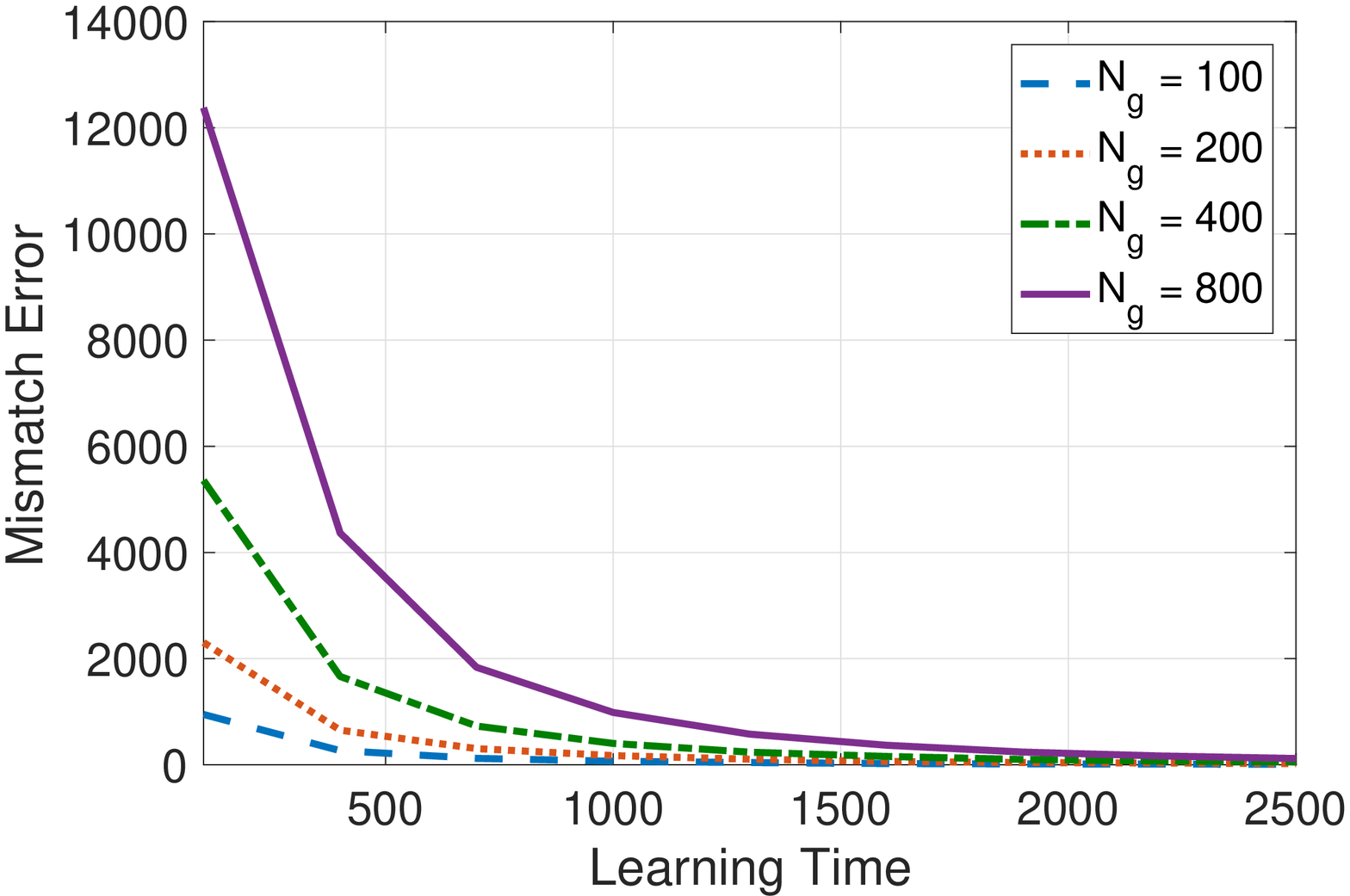}}
	\subfloat[Normalized estimation error]{\label{fig_est_error}
		\includegraphics[width=.33\columnwidth]{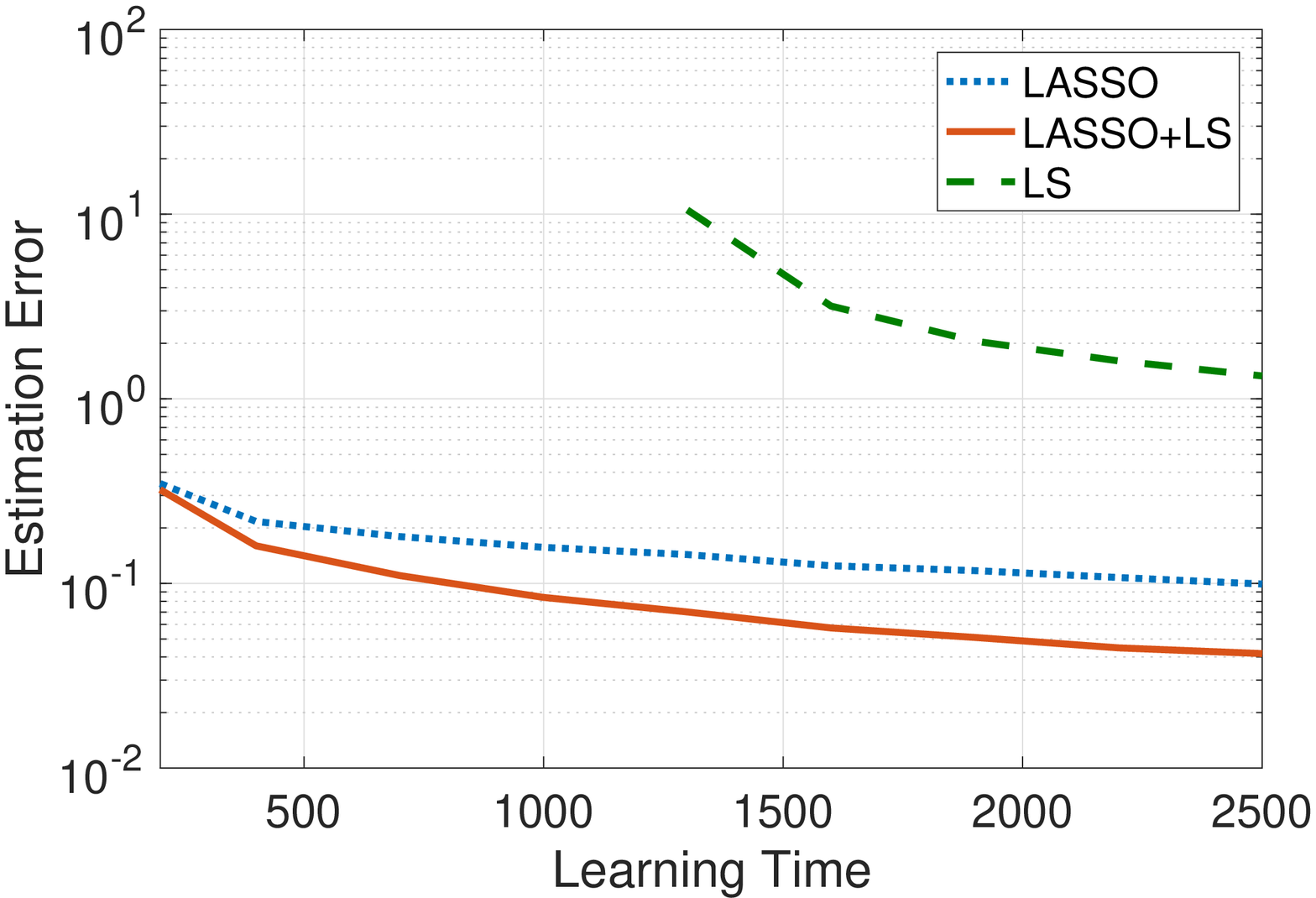}}
	\subfloat[The distribution of $\gamma$]{\label{fig_gamma}
		\includegraphics[width=.33\columnwidth]{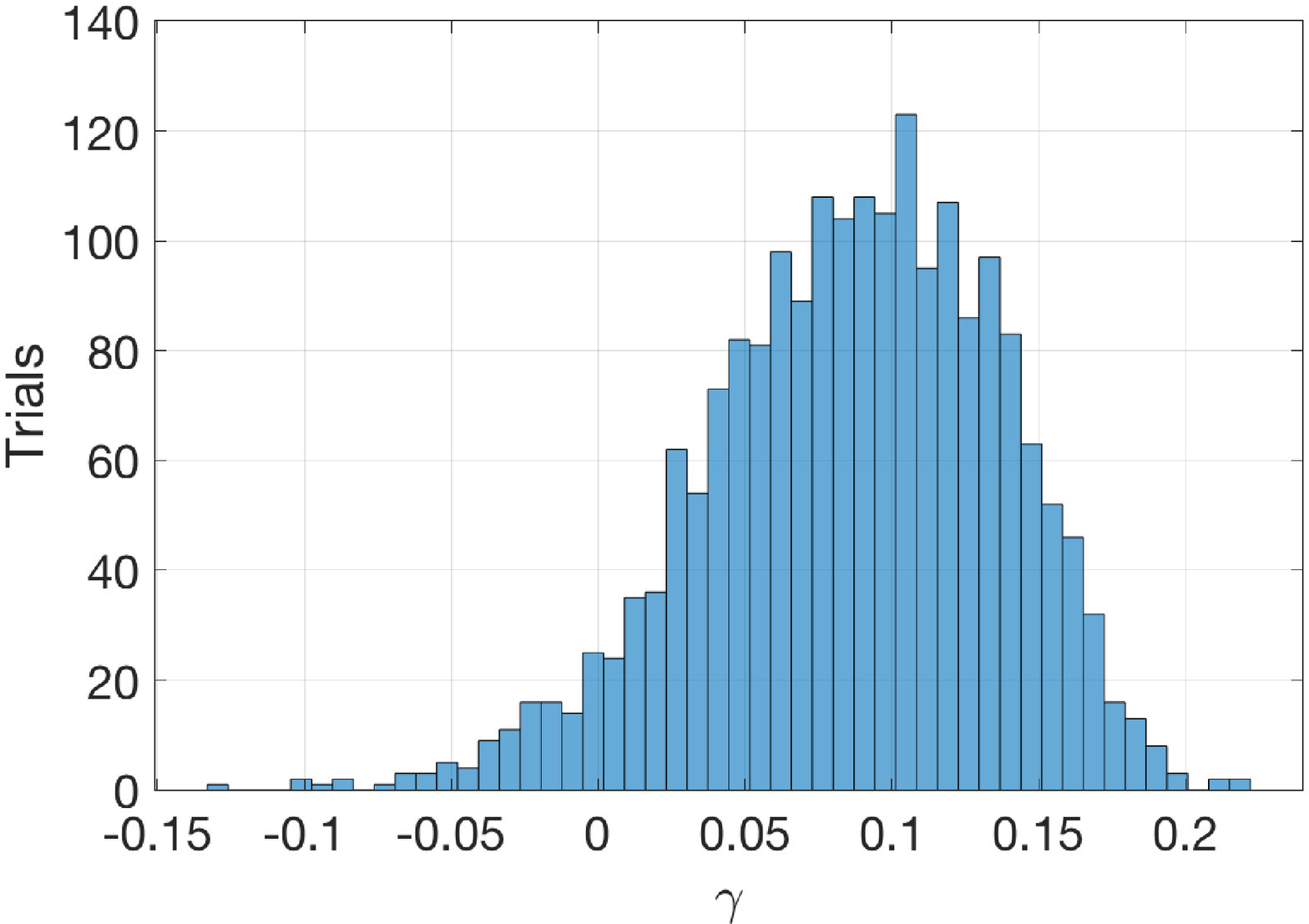}}
	\caption{ \footnotesize (a) The mismatch error with respect to the learning time for different number of generators in the system. The values are averaged over 10 independent trials. (b) The normalized estimation error for Lasso (abbreviated as \texttt{LASSO}), Lasso + least-squares (abbreviated as \texttt{LASSO+LS}), and least-squares (abbreviated as \texttt{LS}) estimators with respect to the learning time. The values are averaged over 10 independent trials. (c) The distribution of mutual incoherence parameter $\gamma$ for 2000 randomly generated instances of the problem.}
\end{figure*}

As a case study, we consider the frequency control problem for power systems, where the goal is to control the governing frequency of the entire network, based on the so-called \textit{swing} equations. Assume that there exist $N_g$ generators in the system. It is common to describe the per-unit swing equations using the well-known direct current (DC) approximation:
\begin{align}\nonumber
M_i\ddot{\theta}_i+D_i\dot{\theta}_i = P_{M_i} - P_{E_i}
\end{align}
where $\theta_i$ is the voltage angle at generator $i$, $P_{M_i}$ is the mechanical power input at generator $i$, and $P_{E_i}$ denotes the active power injection at the bus connected to generator $i$. Furthermore, $M_i$ and $D_i$ are the inertia and damping coefficients at generator $i$, respectively. Under the DC approximation, the relationship between active power injection and voltage is defined as follows:
\begin{align}\nonumber
P_{E_i} = \sum_{j\in\mathcal{N}_i}B_{ij}(\theta_i-\theta_j)
\end{align}
where $n$ is the number of generators in the network, $\mathcal{N}_i$ collects the neighbors of generator $i$, and $B_{ij}$ is the susceptance of the line $(i,j)$. After discretization with the sampling time $dt$, the system of swing equations is reduced to the following dynamical system:
\begin{align}\nonumber
x_{i}(t+1) = \left(A_{ii}x_i(t)+\sum_{j\in\mathcal{N}_i}A_{ij}x_j(t)\right)+B_{ii}u_i(t)+w_i(t)
\end{align}
where $x_{i} = \begin{bmatrix}
\theta_i & \dot{\theta_i}
\end{bmatrix}^\top$, $u_i(t) = P_{M_i}$, and 
\begin{align}\nonumber
A_{ii} \!=\! \begin{bmatrix}
1 & dt\\
-\frac{\sum_{j\in\mathcal{N}_i}B_{ij}}{M_i}dt & 1-\frac{D_i}{M_i}dt
\end{bmatrix}, A_{ij} \!=\! \begin{bmatrix}
0 & 0\\
\frac{B_{ij}}{M_i}dt & 0
\end{bmatrix}, B_{ii} \!=\! \begin{bmatrix}
0\\
1
\end{bmatrix}
\end{align}
The goal is to identify the underlying dynamical system based on a single sample trajectory consisting of a sequence of mechanical power inputs and their effects on the angles and frequencies of different generators. To assess the performance of the proposed method, we generate several instances of the problem according to the following rules:
\begin{itemize}
	\item[-] the generators are connected via a randomly generated tree with a maximum degree of $10$.
	\item[-] the parameters $B_{ij}$, $M_i$, $D_i$ are uniformly chosen from $[0.5,1]$, $[1,2]$, $[0.5,1.5]$, respectively.
\end{itemize}
Furthermore, the sampling time $dt$ is set to $0.1$. We assume that the disturbance noise has a zero-mean Gaussian distribution with covariance $0.01I_{2\times 2}$. Notice that the magnitude of the noise is comparable to those of the nonzero elements in $A$ and $B$. Furthermore, the mechanical input is set to $u_i(t) = -0.1(\theta_i+\dot{\theta_i})+v_i(t)$, where $v_i(t)$ is a randomly generated input noise, distributed according to a zero-mean Gaussian distribution with variance $0.05$. Notice that the first term in the input signal is used to ensure the closed-loop stability. 

The reported results are for a serial implementation in MATLAB R2017b, and the function \texttt{lasso} is used to solve~\eqref{ls_single}. It is worthwhile to note that the running time can be further reduced via parallelization; this is trivially possible due to the decomposable nature of the problem. The \textit{mismatch error} is defined as the total number of false positives and false negatives in the sparsity pattern of the estimated parameters $(\hat{A},\hat{B})$. Furthermore, \textit{relative learning time} (RLT) is defined as the learning time normalized by the dimension of the system, and \textit{relative mismatch error} (RME) is used to denote the mismatch error normalized by the total number of elements in $A$ and $B$. In all of our experiments, the regularization coefficient $\lambda$ is set to $\lambda = \sqrt{\frac{0.03\log(n+m)}{T}}$. Note that this value does not require any additional fine-tuning and is at most a constant factor away from~\eqref{lower_lambda}.

Figure~\ref{fig_ME} illustrates the mismatch error (averaged over 10 different trials) with respect to the learning time $T$ and for different number of generators $N_g$ that are chosen from $\{100,200,400,800\}$. These correspond to the total system dimensions of $\{300, 600, 1200, 2400\}$. Note that the largest instance has more than $3.84$ million unknown parameters. Not surprisingly, the learning time needed to achieve a small mismatch error increases as the dimension of the system grows. Conversely, a smaller value for RLT is needed to achieve infinitesimal RME for larger systems. In particular, when $N_g$ is equal to $100$, $200$, $400$, and $800$, the minimum RLT to guarantee RME $\leq 0.1\%$ is equal to $3.83$, $1.42$, $0.50$, and $0.16$, respectively.

As mentioned before, the accuracy of the proposed estimator can be improved by additionally applying the least-squares over the nonzero elements of $(\hat{A}, \hat{B})$. Figure~\ref{fig_est_error} illustrates the normalized 2-norm estimation error of this approach (abbreviated as \texttt{LASSO+LS}), compared to the proposed method without any post-processing step (abbreviated as \texttt{LASSO}), and the least-squares estimator (abbreviated as \texttt{LS}) when $N_g$ is set to $200$. It can be observed that both \texttt{LASSO+LS} and \texttt{LS} significantly outperform \texttt{LS}; in fact, \texttt{LS} is not even well-defined if the learning time is strictly less than the system dimensions. Furthermore, on average, the estimation error for \texttt{LASSO+LS} is $1.91$ times smaller than that of \texttt{LASSO}.

Finally, only 32 out of 360 generated instances did not satisfy the proposed mutual incoherence condition. However, this violation did not have a significant effect on the accuracy of the proposed estimator. To further investigate the frequency of the instances that satisfy this condition, we plot the histogram of the mutual incoherence parameter $\gamma$ for 2000 randomly generated instances with fixed $N_g = 200$. It can be seen in Figure~\ref{fig_gamma} that the mutual incoherence condition is violated only for $5.15\%$ of the instances.

\section{Conclusions}\label{sec:conclusion}
The problem of sparse system identification of linear time-invariant (LTI) systems is considered in this work, where the goal is to estimate the sparse structure of the system matrices based on a single sample trajectory of the dynamics. A Lasso-type estimator is introduced to identify the parameters of the system, while promoting their sparsity via a $\ell_1$-regularization technique. By carefully examining the underlying properties of the system---such as its stability and mutual incoherency---we provide non-asymptotic bounds on the accuracy of the proposed estimator. In particular, we show that it correctly identifies the sparsity structure of the system matrices and enjoys a sharp upper bound on its estimation error, provided that the learning time exceeds a threshold. We further show that this threshold scales polynomially in the number of nonzero elements but logarithmically in the system dimensions.

\bibliographystyle{IEEEtran}
\bibliography{bib.bib}

\appendix
\section{Proof of Theorem~\ref{thm1_single}}
In this section, we present the sketch of the proof for the main theorem. 
Define
\begin{equation}\nonumber
L(\Psi_{:,j}) = \|Y-X\Psi_{:,j}\|^2_2
\end{equation}
and 
\begin{equation}\label{lsj}
\hat{\Psi}_{:,j} = \arg\min \frac{1}{2T}L(\Psi_{:,j})+\lambda\|\Psi_{:,j}\|_1
\end{equation}
for every $j\in\{1,2,..., n\}$. It is easy to verify that 
\begin{equation}\nonumber
\hat\Psi = \begin{bmatrix}
\hat{\Psi}_{:,1} & \hat{\Psi}_{:,2} & \cdots & \hat{\Psi}_{:,n}
\end{bmatrix}
\end{equation}
Furthermore, the Gradient and Hessian of $L(\cdot)$ are equal to
\begin{align}
& G = -\nabla L(\Psi_{:,j})|_{\Psi_{:,j} = \Psi_{:,j}^*} = \frac{1}{T}X^TW_{:,j}, \nonumber\\
& M = \nabla^2 L(\Psi_{:,j})|_{\Psi_{:,j} = \Psi^*_{:,j}} = \frac{1}{T} X^TX\nonumber
\end{align}
Note that $G$ can be different for every $j$. However, we keep this dependency implicit in the notations to streamline the presentation.
The following Lemma is at the core of our subsequent analysis:
\begin{lemma}[Proposition 4.1 \cite{pereira2010learning}]\label{l_sparse}
	Suppose that the following conditions are satisfied:
	\begin{align}
	&\|G\|_{\infty}\leq\frac{\lambda\gamma}{3},\nonumber\\
	&\|G_{\mathcal{A}_j}\|_{\infty}\leq \frac{\Psi_{\min}C_{\min}}{4k}-\lambda\nonumber\\
	& \vertiii{M_{\mAj^c \mAj}\!-\!M^*_{\mAj^c \mAj}}_{\infty}\!\!\leq \frac{\gamma C_{\min}}{12\sqrt{k}},\nonumber\\
	&\vertiii{M_{\mAj \mAj}\!-\!M^*_{\mAj \mAj}}_{\infty}\!\!\leq \frac{\gamma C_{\min}}{12\sqrt{k}}\nonumber
	\end{align}
	Then,~\eqref{lsj} recovers the true sparsity pattern of $\Psi^*_{:,j}$.
\end{lemma}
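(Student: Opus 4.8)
The plan is to establish exact support recovery via the \emph{primal-dual witness} (PDW) construction, the standard route for sign-consistency results of Lasso-type estimators. Since~\eqref{lsj} decouples across columns, I fix the column index $j$, abbreviate $S=\mathcal{A}_j$ and $S^c=\mathcal{A}_j^c$, and start from the subgradient optimality condition $\frac{1}{T}X^\top(X\hat\Psi_{:,j}-Y)+\lambda\hat z=0$, where $\hat z\in\partial\|\hat\Psi_{:,j}\|_1$. Substituting $Y=X\Psi^*_{:,j}+W_{:,j}$ and writing $\Delta=\hat\Psi_{:,j}-\Psi^*_{:,j}$ turns this into the compact stationarity system
\begin{equation}\nonumber
M\Delta-G+\lambda\hat z=0 .
\end{equation}
The PDW candidate is built by \emph{forcing} $\Delta_{S^c}=0$, solving the restricted equation on the support to get $\Delta_S=M_{SS}^{-1}(G_S-\lambda\hat z_S)$ with $\hat z_S=\mathrm{sign}(\Psi^*_S)$, and then \emph{defining} the complementary dual variable through $\hat z_{S^c}=\frac{1}{\lambda}\bigl(G_{S^c}-M_{S^cS}\Delta_S\bigr)$.

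This candidate is the unique optimum with the correct support once two facts are verified: strict dual feasibility $\|\hat z_{S^c}\|_\infty<1$, and sign consistency on the support, which follows from the deviation bound $\|\Delta_S\|_\infty<\Psi_{\min}$. For dual feasibility I substitute $\Delta_S$ to get $\hat z_{S^c}=\frac{1}{\lambda}\bigl(G_{S^c}-M_{S^cS}M_{SS}^{-1}G_S\bigr)+M_{S^cS}M_{SS}^{-1}\hat z_S$ and bound the two pieces separately: the second is controlled by $\vertiii{M_{S^cS}M_{SS}^{-1}}_\infty$, compared against its population value $\vertiii{M^*_{S^cS}(M^*_{SS})^{-1}}_\infty\le 1-\gamma$ from Assumption A1; the first is at most $\frac{\|G\|_\infty}{\lambda}\bigl(1+\vertiii{M_{S^cS}M_{SS}^{-1}}_\infty\bigr)$, which the hypothesis $\|G\|_\infty\le\lambda\gamma/3$ keeps below a small multiple of $\gamma$. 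For the support deviation I use $\|\Delta_S\|_\infty\le\vertiii{M_{SS}^{-1}}_\infty(\|G_S\|_\infty+\lambda)$ together with the eigenvalue bound $\vertiii{(M^*_{SS})^{-1}}_\infty\le\sqrt{k}/C_{\min}$ from A2 and the hypothesis $\|G_{\mathcal{A}_j}\|_\infty+\lambda\le\Psi_{\min}C_{\min}/(4k)$, which collapses $\|\Delta_S\|_\infty$ below $\Psi_{\min}$.

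The main obstacle is propagating the two sample-versus-population Hessian discrepancies—supplied in the hypotheses as $\vertiii{M_{SS}-M^*_{SS}}_\infty\le\frac{\gamma C_{\min}}{12\sqrt{k}}$ and $\vertiii{M_{S^cS}-M^*_{S^cS}}_\infty\le\frac{\gamma C_{\min}}{12\sqrt{k}}$—through the matrix inverse $M_{SS}^{-1}$. The key step is the first-order perturbation identity $M_{SS}^{-1}-(M^*_{SS})^{-1}=-M_{SS}^{-1}(M_{SS}-M^*_{SS})(M^*_{SS})^{-1}$, which, since $\vertiii{(M^*_{SS})^{-1}}_\infty\cdot\vertiii{M_{SS}-M^*_{SS}}_\infty\le\gamma/12<1$, yields $\vertiii{M_{SS}^{-1}}_\infty\le 2\sqrt{k}/C_{\min}$ and a correspondingly small bound on $\vertiii{M_{SS}^{-1}-(M^*_{SS})^{-1}}_\infty$. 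Decomposing $M_{S^cS}M_{SS}^{-1}-M^*_{S^cS}(M^*_{SS})^{-1}=(M_{S^cS}-M^*_{S^cS})M_{SS}^{-1}+M^*_{S^cS}\bigl(M_{SS}^{-1}-(M^*_{SS})^{-1}\bigr)$ then shows the perturbation inflates the incoherence constant by at most $\gamma/3$, giving $\vertiii{M_{S^cS}M_{SS}^{-1}}_\infty\le 1-2\gamma/3$. The delicate part is the bookkeeping: checking that the explicit constants $\tfrac13,\tfrac14,\tfrac{1}{12}$ are calibrated so every accumulated error term leaves strict slack below $1$ for dual feasibility and below $\Psi_{\min}$ for sign consistency. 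Once this is in place, strict convexity of the restricted problem (from $M_{SS}\succ 0$, guaranteed by A2 and the perturbation bound) upgrades the construction to uniqueness, completing the recovery of the sparsity pattern of $\Psi^*_{:,j}$.
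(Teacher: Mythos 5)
The paper never proves this lemma itself---it is imported verbatim as Proposition~4.1 of~\cite{pereira2010learning}---and your primal-dual witness construction is exactly the standard argument underlying that cited result, so approach-wise there is nothing to contrast. Your sketch is correct and the constant bookkeeping does close (strict dual feasibility holds with slack $2\gamma^2/9$, and $\|\Delta_S\|_\infty\leq\Psi_{\min}/(2\sqrt{k})<\Psi_{\min}$); the only step needing care is the term $M^*_{S^cS}\bigl(M_{SS}^{-1}-(M^*_{SS})^{-1}\bigr)$ in your incoherence decomposition, where the inverse perturbation must be expanded in the ordering $-(M^*_{SS})^{-1}(M_{SS}-M^*_{SS})M_{SS}^{-1}$ so that the factor $M^*_{S^cS}(M^*_{SS})^{-1}$, bounded by $1-\gamma$ under Assumption A1, appears on the left---a naive bound would instead require control of $\vertiii{M^*_{S^cS}}_\infty$ alone, which the assumptions do not provide.
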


The first step in proving Theorem~\ref{thm1_single} is to verify that the conditions of Lemma~\ref{l_sparse} hold with high probability. To this goal, first we write $x(t)$ and $u(t)$ in terms of $x(0)$, $w(\tau)$ and $v(\tau)$ for $\tau = 0,1,\dots, t$:
\begin{align}
x(t) =& (A+BK_0)^tx(0)+\sum_{\tau=0}^{t-1}(A+BK_0)^{t-\tau-1}(w(\tau)+Bv(\tau))\nonumber\\
u(t) =& v(t)+K_0(A+BK_0)^tx(0)+\sum_{\tau=0}^{t-1}K_0(A+BK_0)^{t-\tau-1}(w(\tau)+Bv(\tau))\nonumber
\end{align}
Instead of initiating the system at $x(0)$ with the stationary distribution, we will start at the time $-T_0$, with a modified initial state $x(-T_0) = w(-T_0-1)+Bv(-T_0-1)$, where $w(-T_0-1)$ and $v(-T_0-1)$ have the same distributions as the disturbance and input noises, respectively. Since the system is stable, by taking $T_0\rightarrow\infty$ and invoking the Continuous Mapping Theorem, the matrices
\begin{align}\nonumber
\begin{bmatrix}
x(0) & x(1) &\dots & x(T-1)
\end{bmatrix}
\end{align}
and 
\begin{align}
\begin{bmatrix}\nonumber
K_0x(0)\!+\!v(0)\! &\! K_0x(1)\!+\!v(1) \!&\!\dots \!&\! K_0x(T\!-\!1)\!+\!v(T\!-\!1)
\end{bmatrix}
\end{align}
converge in distribution to the same matrices when the system is initialized at a state with the stationary distribution. Therefore, without loss of generality, we will focus on the former. Based on this observation, one can write
\begin{align}
& x(t) = \lim\limits_{T_0\rightarrow\infty}\sum_{\tau=-T_0-1}^{t-1}(A+BK_0)^{t-\tau-1}(w(\tau)+Bv(\tau))\nonumber\\
& u(t) = v(t)\!+\!\lim\limits_{T_0\rightarrow\infty}\!\sum_{\tau=-T_0-1}^{t-1}\!K_0(A+BK_0)^{t-\tau-1}(w(\tau)\!+\!Bv(\tau))\nonumber
\end{align}
This implies that the elements in $G$ and $M$ can be written as quadratic functions of the disturbance and input noises in the form of $G_i = z^\top R_G z$ and $M_{ij} = z^\top R_M z$, where $z\in\mathbb{R}^{(n+m)(t+T_0+1)}$ is a random vector, defined as
\begin{align}\nonumber
z \!=\! \begin{bmatrix}
w(-T_0\!-\!1)^\top \!&\! \cdots \!&\! w(t-1)^\top \!&\! v(-T_0\!-\!1)^\top &\!\cdots\! \!&\! v(t-1)^\top
\end{bmatrix}^\top
\end{align}
The following theorem will be used in our analysis to provide concentration bounds on $G$ and $M$.
\begin{theorem}[Hanson-Wright inequality~\cite{rudelson2013hanson}]
	Let $x = \begin{bmatrix}
	x_1 & x_2 & \dots & x_n
	\end{bmatrix}$ be a random vector with independent zero-mean sub-Gaussian elements. Given a square and symmetric matrix $P$, the following inequality holds
	\begin{align}\nonumber
	\mathbb{P}\left(\left|x^\top Px - \mathbb{E}\left\{x^\top Px\right\}\right|>t\right)
	\leq 2\exp\left(-c\cdot\min\left\{\frac{t^2}{\|x\|^4_{\psi}\|P\|^2_F}, \frac{t}{\|x\|^2_{\psi}\vertiii{P}}\right\}\right)
	\end{align}
	for every $t\geq 0$, where $c$ is a universal constant.
\end{theorem}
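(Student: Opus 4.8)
The statement to establish is the Hanson--Wright inequality, so the plan is to reconstruct the moment-generating-function argument of Rudelson--Vershynin. Write $K = \|x\|_\psi$, and decompose the centered quadratic form into its diagonal and off-diagonal contributions,
\begin{equation}\nonumber
x^\top P x - \mathbb{E}\{x^\top P x\} = \underbrace{\sum_{i} P_{ii}\bigl(x_i^2 - \mathbb{E}\{x_i^2\}\bigr)}_{=:D} \; + \; \underbrace{\sum_{i\neq j} P_{ij}\, x_i x_j}_{=:S},
\end{equation}
and bound each piece separately, recombining them at the end by splitting the deviation $t$ across $D$ and $S$ and taking a union bound. Throughout, I would control tails via the Chernoff bound $\mathbb{P}(D+S > t) \le e^{-\lambda t}\,\mathbb{E}\{e^{\lambda(D+S)}\}$ and optimize over $\lambda$ inside an admissible range $0 < \lambda \lesssim 1/(K^2\vertiii{P})$; the two regimes in the statement will emerge precisely from whether the unconstrained optimizer falls inside this range.

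For the diagonal term $D$, each summand $x_i^2 - \mathbb{E}\{x_i^2\}$ is a centered \emph{sub-exponential} variable, because the square of a sub-Gaussian with parameter at most $K$ is sub-exponential with a comparable parameter. Hence $D$ is a weighted sum of independent centered sub-exponentials with weights $P_{ii}$, and Bernstein's inequality delivers a two-regime tail governed by the variance proxy $\sum_i P_{ii}^2 \le \|P\|_F^2$ and the scale $\max_i |P_{ii}| \le \vertiii{P}$. This already matches the target form for $D$, so I regard this term as routine.

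The substantive work is the off-diagonal chaos $S$, and this is where I expect the main obstacle. The plan is to \emph{decouple}: a standard decoupling inequality for quadratic forms lets me replace $\mathbb{E}\{e^{\lambda S}\}$ by $\mathbb{E}\{e^{4\lambda S'}\}$, where $S' = \sum_{i,j} P_{ij}\, x_i x_j' = x^\top P x'$ and $x'$ is an independent copy of $x$. Conditioning on $x$, the variable $S'$ is a weighted sum of the independent sub-Gaussians $x_j'$, so the sub-Gaussian MGF bound gives $\mathbb{E}_{x'}\{e^{\lambda S'}\} \le \exp(cK^2\lambda^2 \|Px\|_2^2)$. The remaining difficulty is to bound $\mathbb{E}_x\exp(cK^2\lambda^2\|Px\|_2^2)$, the MGF of a nonnegative quadratic form in $x$. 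I would linearize it by introducing an auxiliary standard Gaussian vector $g$ through the identity $\exp(\tfrac12\|a\|_2^2) = \mathbb{E}_g\exp(\langle g, a\rangle)$, swap the order of the (now nonnegative) expectations, and apply the sub-Gaussian MGF bound a second time, now in $x$, to reduce matters to $\mathbb{E}_g\exp(c'K^4\lambda^2\|Pg\|_2^2)$. Because $g$ is Gaussian, this last expectation is an exact Gaussian chaos: it equals $\prod_i(1 - c''K^4\lambda^2\sigma_i^2)^{-1/2}$ in the singular values $\sigma_i$ of $P$, which is finite and bounded by $\exp(cK^4\lambda^2\|P\|_F^2)$ exactly when $K^4\lambda^2\vertiii{P}^2$ is small enough, i.e. in the admissible range. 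The two uses of the sub-Gaussian bound are what produce the $K^4$ factor.

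Assembling the pieces yields $\mathbb{E}\{e^{\lambda S}\} \le \exp(cK^4\lambda^2\|P\|_F^2)$ for all admissible $\lambda$. Feeding this into the Chernoff bound and optimizing over $\lambda$ gives the characteristic mixed tail: the free optimizer $\lambda \sim t/(K^4\|P\|_F^2)$ is admissible precisely when $t \lesssim K^2\|P\|_F^2/\vertiii{P}$, producing the sub-Gaussian branch $\exp(-ct^2/(K^4\|P\|_F^2))$; otherwise the optimum sits at the boundary $\lambda \sim 1/(K^2\vertiii{P})$, producing the sub-exponential branch $\exp(-ct/(K^2\vertiii{P}))$. The minimum of these two exponents, with $K = \|x\|_\psi$, is exactly the quantity in the statement, and combining the diagonal and off-diagonal tails (absorbing the split of $t$ into the universal constant $c$) yields the claimed bound with the factor $2$ in front.
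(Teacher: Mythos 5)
This statement is not proved in the paper at all: it is the Hanson--Wright inequality, imported verbatim by citation to Rudelson--Vershynin, so there is no in-paper argument to compare against. Your sketch is a correct and faithful reconstruction of the cited source's own proof --- the diagonal/off-diagonal split with Bernstein for the sub-exponential diagonal, decoupling plus the Gaussian-linearization comparison for the off-diagonal chaos, and the Chernoff bound optimized over the restricted range $0<\lambda\lesssim 1/(\|x\|_{\psi}^{2}\vertiii{P})$ to produce the two tail regimes --- so it coincides with the de facto proof the paper relies on, and no gap needs to be flagged.
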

For a symmetric matrix $P$, we have $\|P\|^2_F = \sum_{k=1}^{n}\lambda^2_k$. Therefore, the above theorem implies that, for a sub-Gaussian random vector $z$ with independent elements, we have
\begin{align}
\mathbb{P}\left(\left|z^\top Pz \!-\! \mathbb{E}\left\{z^\top Pz\right\}\right|>t\right)\leq 2\exp\left(-c\cdot\frac{t^2}{\|z\|^4_{\psi}\left(\sum_{k=1}^{n}\lambda^2_k\right)}\right)\nonumber
\end{align}
provided that $t\leq \left(\frac{\sum_k\lambda_k^2}{\max_k|\lambda_k|}\right)\|z\|^2_\psi$. The assumptions of Lemma~\ref{l_sparse} can be seen to hold directly as a consequence of the following two lemmas:
\begin{lemma}\label{l1_app}
	Let $i\in\{1,2,..., n+m\}$ and suppose that $\epsilon<\frac{3C\eta^2}{1-\rho}$. Then, there exists a universal constant $c_4$ such that
	\begin{equation}\nonumber
	\mathbb{P}\{|G_i|>\epsilon\}\leq 2\exp\left(-c_4\frac{(1-\rho)^2}{C^2\eta^4}T\epsilon^2\right)
	\end{equation}
\end{lemma}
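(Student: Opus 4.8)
The plan is to realize $G_i$ as a \emph{centered} quadratic form in the underlying noise vector $z$ and then apply the Hanson--Wright inequality, with essentially all of the work going into bounding the spectral and Frobenius norms of the associated matrix through the geometric decay $\vertiii{(A+BK_0)^\tau B}\le C\rho^\tau$ and its variants.

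First I would write $G_i=\frac{1}{T}X_{:,i}^\top W_{:,j}$ and use the expansions of $x(t),u(t)$ in terms of $z=(w(\tau)^\top,v(\tau)^\top)^\top$ to express both columns linearly in $z$: $X_{:,i}=\Phi z$ and $W_{:,j}=\Xi z$, where $\Xi$ is a coordinate-selection matrix picking out $w(t)_j$, and $\Phi$ is the $T\times\dim(z)$ matrix whose rows collect the coefficients of $x(t)_i$ (or $u(t)_{i-n}$) — powers of $A+BK_0$, possibly post-multiplied by $B$ and/or pre-multiplied by $K_0$. Then $G_i=z^\top R_G z$ with symmetric $R_G=\frac{1}{2T}(\Phi^\top\Xi+\Xi^\top\Phi)$. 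The key observation is that $X_{t,i}$ depends only on noise strictly before time $t$, whereas $W_{t,j}=w(t)_j$ is the time-$t$ disturbance; hence $R_G$ has zero diagonal, and since $z$ has independent centered entries, $\mathbb{E}\{G_i\}=\mathbb{E}\{z^\top R_G z\}=0$. This is exactly what lets me drop the mean term in Hanson--Wright.

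The core of the proof is bounding $\|R_G\|_F$ and $\vertiii{R_G}$. Since $\Xi$ has orthonormal rows, $\Xi\Xi^\top=I_T$, so $\vertiii{R_G}\le\frac{1}{T}\vertiii{\Phi}$ and $\|R_G\|_F\le\frac{1}{T}\|\Phi^\top\Xi\|_F=\frac{1}{T}\|\Phi\|_F$. The matrix $\Phi$ is a block lower-triangular convolution whose lag-$s$ block is a single row/column of $(A+BK_0)^s$, $(A+BK_0)^sB$, $K_0(A+BK_0)^s$, or $K_0(A+BK_0)^sB$, according to whether $i$ and the noise source are state- or input-type; the four decay bounds control each block by $C\rho^s$ in spectral norm. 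A standard block-Toeplitz operator-norm estimate then gives $\vertiii{\Phi}\le\sum_{s\ge0}C\rho^s=\frac{C}{1-\rho}$, while row-wise summation gives $\|\Phi\|_F^2\le\sum_{t}\sum_{s\ge0}(C\rho^s)^2=\frac{TC^2}{1-\rho^2}$. Therefore $\vertiii{R_G}\le\frac{C}{(1-\rho)T}$ and $\|R_G\|_F^2\le\frac{C^2}{(1-\rho^2)T}\le\frac{C^2}{(1-\rho)^2T}$. These bounds are uniform in the horizon truncation $T_0$, so the Continuous Mapping argument already used in the main proof lets me pass to $T_0\to\infty$.

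Finally I would feed $\|z\|_\psi\le\eta$ and these two norm bounds into Hanson--Wright and split on which branch of the $\min$ is active. On the Frobenius branch the exponent is $c\,\epsilon^2/(\eta^4\|R_G\|_F^2)\ge c\,(1-\rho)^2T\epsilon^2/(C^2\eta^4)$, which is precisely the claimed form. On the spectral branch the exponent is $c\,\epsilon/(\eta^2\vertiii{R_G})\ge c\,(1-\rho)T\epsilon/(C\eta^2)$, and here the hypothesis $\epsilon<\frac{3C\eta^2}{1-\rho}$, i.e. $\frac{(1-\rho)\epsilon}{C\eta^2}<3$, shows this linear-in-$\epsilon$ exponent still dominates $c_4\,(1-\rho)^2T\epsilon^2/(C^2\eta^4)$ once $c_4\le c/3$; this is exactly why that threshold on $\epsilon$ appears in the statement. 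Taking $c_4$ to be the smaller of the two resulting constants yields the bound. I expect the main obstacle to be the third step: setting up the convolution/Toeplitz structure of $\Phi$ carefully, tracking the four decay cases for the $B$- and $K_0$-decorated powers, and justifying the block-Toeplitz operator-norm inequality. Once the norm bounds are in hand, the probabilistic step is routine.
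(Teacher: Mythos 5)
Your proposal is correct and follows essentially the same route as the paper: realize $G_i$ as a zero-diagonal (hence centered) quadratic form in the stacked noise vector, bound the spectral and Frobenius norms of the block-Toeplitz coefficient matrix via the geometric decay $C\rho^\tau$ of the (possibly $B$- and $K_0$-decorated) closed-loop powers, and apply Hanson--Wright, with the hypothesis $\epsilon<\frac{3C\eta^2}{1-\rho}$ serving exactly to handle which branch of the $\min$ is active. The only differences are bookkeeping: the paper bounds the sum of squared eigenvalues through a rank-$O(T)$ argument and treats the state case $i\le n$ and input case $i>n$ via two explicit block matrices, while you use a row-wise Frobenius sum and the factored form $\frac{1}{2T}\left(\Phi^\top\Xi+\Xi^\top\Phi\right)$ — equivalent in substance and constants up to universal factors.
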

\begin{proof}
	See Appendix~\ref{pr_l1_app}.
\end{proof}
\begin{lemma}\label{l2_app}
	Let $i,j\in\{1,2,...,n+m\}$ and suppose that $\epsilon\leq\frac{4C^2\eta^2}{(1-\rho)^2}$. Then, there exists a universal constant ${c}_5$ such that
	\begin{equation}\nonumber
	\mathbb{P}\{|M_{ij}-M^*_{ij}|>\epsilon\}\leq 2\exp\left(-{c}_5\frac{(1-\rho)^4}{C^4\eta^4}T\epsilon^2\right)
	\end{equation}
\end{lemma}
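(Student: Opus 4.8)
The plan is to recognize $M_{ij}$ as a quadratic form in the independent sub-Gaussian noise vector $z$ already assembled in the excerpt, and then to invoke the Hanson--Wright inequality after bounding the spectral and Frobenius norms of the corresponding kernel matrix $R_M$ by exploiting exponential stability.

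First I would make the quadratic form explicit. Writing $\mathcal{G}=A+BK_0$ and $\phi(t)=\begin{bmatrix}x(t)^\top & u(t)^\top\end{bmatrix}^\top$, each scalar $\phi_i(t)$ is a fixed linear functional of $z$; collecting these coefficients row by row defines a matrix $\Xi_i$ (with $T$ rows) such that the $i$-th column of $X$ equals $\Xi_i z$. Its entries are read off from $\mathcal{G}^{\,t-\tau-1}$, $\mathcal{G}^{\,t-\tau-1}B$, $K_0\mathcal{G}^{\,t-\tau-1}$ and $K_0\mathcal{G}^{\,t-\tau-1}B$ (plus one unit coefficient from the $v(t)$ term when $i$ indexes an input). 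Then $M_{ij}=\tfrac1T z^\top\Xi_i^\top\Xi_j z$, so the symmetric kernel is $R_M=\tfrac1{2T}(\Xi_i^\top\Xi_j+\Xi_j^\top\Xi_i)$; stationarity gives $\mathbb{E}\{z^\top R_M z\}=M^*_{ij}$, and $\|z\|_\psi\le\eta$ by assumption. (All of this is carried out for finite $T_0$ with bounds uniform in $T_0$, after which one passes to the limit $T_0\to\infty$.)

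The crux is the spectral-norm estimate $\vertiii{R_M}\le\tfrac1T\vertiii{\Xi_i}\vertiii{\Xi_j}$, which reduces to bounding each $\vertiii{\Xi_i}$. I would do this through the $T\times T$ Gram matrix $\Xi_i\Xi_i^\top$: for $t\ge s$ its $(t,s)$ entry is a sum over lags $p\ge0$ of terms of the form $e_k^\top\mathcal{G}^{\,p+(t-s)}(I+BB^\top)(\mathcal{G}^{\,p})^\top e_k$, each bounded in magnitude by $2C^2\rho^{\,(t-s)+2p}$ once the stability bounds $\vertiii{\mathcal{G}^\tau},\vertiii{\mathcal{G}^\tau B}\le C\rho^\tau$ are applied. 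Summing the geometric series gives $|(\Xi_i\Xi_i^\top)_{ts}|\le\tfrac{2C^2}{1-\rho^2}\rho^{|t-s|}$, i.e.\ geometrically decaying off-diagonals. Since $\Xi_i\Xi_i^\top$ is symmetric, $\vertiii{\Xi_i\Xi_i^\top}\le\vertiii{\Xi_i\Xi_i^\top}_\infty\le\max_t\sum_s\tfrac{2C^2}{1-\rho^2}\rho^{|t-s|}\le\tfrac{2C^2}{(1-\rho)^2}$, which yields $\vertiii{\Xi_i}\le\tfrac{\sqrt2\,C}{1-\rho}$ and hence $\vertiii{R_M}\le\tfrac{2C^2}{T(1-\rho)^2}$. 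The same bound holds uniformly over state and input coordinates because the $K_0$- and $B$-variants of the stability bound are all dominated by $C\rho^\tau$ and the lone $v(t)$ term contributes a bounded rank-one perturbation.

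For the Frobenius norm I would avoid a second decay computation and simply note $\operatorname{rank}(R_M)\le 2T$, so $\|R_M\|_F\le\sqrt{2T}\,\vertiii{R_M}\le\tfrac{2\sqrt2\,C^2}{\sqrt T\,(1-\rho)^2}$. Feeding both norm bounds into the Hanson--Wright inequality, its two exponents are lower bounded by the sub-Gaussian term $B_1=\tfrac{(1-\rho)^4}{8C^4\eta^4}T\epsilon^2$ and the sub-exponential term $B_2=\tfrac{(1-\rho)^2}{2C^2\eta^2}T\epsilon$. A direct comparison shows $B_2\ge B_1$ precisely on the range $\epsilon\le\tfrac{4C^2\eta^2}{(1-\rho)^2}$ permitted by the hypothesis, so the minimum of the two branches is at least $B_1$ and the claimed bound follows with $c_5=c/8$. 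The main obstacle is precisely the spectral-norm step: everything hinges on turning exponential stability into the geometric off-diagonal decay of $\Xi_i\Xi_i^\top$, which is what makes $\vertiii{\Xi_i}$ (and therefore $\vertiii{R_M}$) dimension-free and gives it the correct $1/T$ scaling.
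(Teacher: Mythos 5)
Your proof is correct and follows essentially the same route as the paper's: both write $M_{ij}$ as a quadratic form $\tfrac{1}{T}z^\top R_M z$ in the stacked sub-Gaussian noise vector, obtain a dimension-free spectral-norm bound of order $C^2/(1-\rho)^2$ from exponential stability, bound the Frobenius norm via $\mathrm{rank}(R_M)\leq 2T$ times the spectral norm, and invoke Hanson--Wright, with the hypothesis $\epsilon\leq 4C^2\eta^2/(1-\rho)^2$ serving exactly to make the sub-Gaussian branch of the minimum dominate. The only differences are cosmetic: you bound $\vertiii{\Xi_i}$ self-containedly through geometric off-diagonal decay of the Gram matrix $\Xi_i\Xi_i^\top$ (a Schur-test argument) and treat all coordinate pairs uniformly, whereas the paper bounds the block-Toeplitz factors $H_{ri}$ by deferring to Lemma A.4 of~\cite{pereira2010learning} and splits into four cases; also note two harmless slips on your side---the extra $v(t)$ block for input coordinates is a norm-one selection matrix (rank $T$, not rank one), and the identity $\mathbb{E}\{z^\top R_M z\}=M^*_{ij}$, which you assert from stationarity, is what the paper verifies separately via the Lyapunov equation in Lemma~\ref{l_expected_M}.
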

\begin{proof}
	See Appendix~\ref{pr_l2_app}.
\end{proof}

The following proposition shows that for a fixed column $j$, the proposed estimator~\eqref{lsj} correctly recovers the sparsity pattern with high probability.
\begin{proposition}\label{prop_recovery}
	Assume that $k\geq 2$ and the following conditions are satisfied:
	\begin{align}
	&\lambda = c_6\cdot\sqrt{\frac{C^2\eta^4}{\gamma^2T(1-\rho)}\log(n+m/\delta)}\label{lambda}\\
	& T\geq c_7\cdot\frac{C^4\eta^4k^2}{\gamma^2C^2_{\min}\Psi^2_{\min}(1-\rho)^4}\log(n+m/\delta)\label{lowerT}
	\end{align}
	for universal constants $c_6, c_7\geq 0$.
	Then,~\eqref{lsj} recovers the true sparsity pattern of $\Psi^*_{:,j}$ with probability of at least $1-\delta$.
\end{proposition}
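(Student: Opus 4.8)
\section*{Proof plan for Proposition~\ref{prop_recovery}}

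The plan is to invoke the deterministic sufficient condition of Lemma~\ref{l_sparse}: once its four inequalities on the gradient $G$ and Hessian $M$ hold, the column estimator \eqref{lsj} provably recovers the support of $\Psi^*_{:,j}$. The whole task therefore reduces to showing that, under the prescribed $\lambda$ and $T$, all four inequalities hold simultaneously with probability at least $1-\delta$. I would establish each one as a high-probability event using the two concentration lemmas --- Lemma~\ref{l1_app} for the gradient entries $G_i$ and Lemma~\ref{l2_app} for the Hessian entries $M_{ij}-M^*_{ij}$ --- and then close with a union bound, allocating a failure budget of roughly $\delta/4$ to each of the four events.

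For the two gradient conditions I would argue entrywise. The bound $\|G\|_\infty \le \lambda\gamma/3$ follows by applying Lemma~\ref{l1_app} with $\epsilon = \lambda\gamma/3$ and a union bound over the at most $n+m$ coordinates of $G$; substituting the $\lambda$ in \eqref{lambda} turns the resulting exponent into a constant multiple of $\log((n+m)/\delta)$, so taking $c_6$ large enough drives the failure probability below $\delta/4$. This is precisely how the expression for $\lambda$ is reverse-engineered. The condition $\|G_{\mAj}\|_\infty \le \Psi_{\min}C_{\min}/(4k)-\lambda$ is treated the same way, now over the $\le k$ coordinates in $\mAj$, but it first requires $\lambda < \Psi_{\min}C_{\min}/(4k)$. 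Since \eqref{lambda} gives $\lambda \propto T^{-1/2}$, this is equivalent (up to the system-dependent constants) to a lower bound of order $k^2$ on $T$, which is exactly the origin of the $k^2$ factor in \eqref{lowerT}; with $T$ that large one may use the conservative threshold $\epsilon = \Psi_{\min}C_{\min}/(8k)$ in Lemma~\ref{l1_app} and again force the failure probability below $\delta/4$.

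The two Hessian conditions are the delicate part, and I expect the conversion from entrywise concentration to the matrix norm to be the main obstacle. Both conditions bound an induced-$\infty$ (maximum absolute row-sum) norm of a deviation $M_{\mathcal{U}\mAj}-M^*_{\mathcal{U}\mAj}$ whose column index set $\mAj$ has cardinality at most $k$. I would control them by bounding the individual entries via Lemma~\ref{l2_app} and summing at most $k$ of them per row, so that an entrywise accuracy $\epsilon'$ yields $\vertiii{M_{\mathcal{U}\mAj}-M^*_{\mathcal{U}\mAj}}_\infty \le k\epsilon'$; matching this against the threshold $\gamma C_{\min}/(12\sqrt{k})$ fixes $\epsilon'$, and a union bound over the $O((n+m)k)$ relevant entries of the diagonal block $M_{\mAj\mAj}$ and the off-diagonal block $M_{\mAj^c \mAj}$ then turns Lemma~\ref{l2_app} into a lower bound on $T$. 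This is the step where the support size $k$ and the stability factors $C/(1-\rho)$ (which enter Lemma~\ref{l2_app} to the fourth power, hence the $C^4/(1-\rho)^4$ in \eqref{lowerT}) combine, so the row sums must be handled carefully rather than by bounding every entry by its worst case, in order to keep the $k$-dependence no worse than the $k^2$ already forced by the gradient gap.

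Finally, I would verify the admissibility hypotheses $\epsilon < 3C\eta^2/(1-\rho)$ and $\epsilon' \le 4C^2\eta^2/(1-\rho)^2$ demanded by Lemmas~\ref{l1_app} and~\ref{l2_app}; these hold comfortably, since every target accuracy above is small (each decays in $T$ and $k$, while $C\ge 1$ and $\Psi_{\min},C_{\min}\le 1$ keep the right-hand sides bounded below by a constant). A union bound over the four events then shows that all hypotheses of Lemma~\ref{l_sparse} hold with probability at least $1-\delta$, and the lemma delivers exact recovery of the support of $\Psi^*_{:,j}$, which completes the proof.
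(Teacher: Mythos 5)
Your overall architecture is the same as the paper's: reduce to the deterministic conditions of Lemma~\ref{l_sparse}, verify each of them with high probability via Lemmas~\ref{l1_app} and~\ref{l2_app} together with union bounds, and fold the resulting requirements into the choices of $\lambda$ and $T$; your identification of the origin of the $k^2$ factor (the requirement $\lambda<\Psi_{\min}C_{\min}/(4k)$ combined with $\lambda\propto T^{-1/2}$) is exactly the paper's inequality~\eqref{lowerT1}. However, there are two genuine gaps. The first is your treatment of the admissibility hypotheses of the concentration lemmas. You assert that $\epsilon<3C\eta^2/(1-\rho)$ and $\epsilon'\le 4C^2\eta^2/(1-\rho)^2$ ``hold comfortably'' because the right-hand sides are ``bounded below by a constant.'' They are not: both are proportional to $\eta^2$, and nothing bounds $\eta$ away from zero. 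For the threshold $\lambda\gamma/3$ the factors of $\eta^2$ cancel after substituting~\eqref{lambda}, but for the thresholds proportional to $\Psi_{\min}C_{\min}/k$ and $\gamma C_{\min}/k^{3/2}$ they do not, and verifying them is a real obligation. This is precisely where the paper invokes Lemma~\ref{l_M_eig}: since $C_{\min}\le\lambda_{\max}(M^*)\le 85C^2\eta^2/(1-\rho)$, every target accuracy (each proportional to $C_{\min}$) automatically shrinks with $\eta^2$, and the residual numerical inequality $k^{3/2}>85/48$ is exactly what the hypothesis $k\ge 2$ supplies. Your proof never uses $k\ge 2$ and never invokes any such spectral bound, so this step fails as written. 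Relatedly, your plan to handle $\|G_{\mAj}\|_\infty\le\Psi_{\min}C_{\min}/(4k)-\lambda$ as a separate application of Lemma~\ref{l1_app} with $\epsilon=\Psi_{\min}C_{\min}/(8k)$ runs into the same admissibility issue; the paper sidesteps it by making the second gradient condition a deterministic consequence of the first, i.e., once $\lambda\le\Psi_{\min}C_{\min}/(8k)$ one has $\|G_{\mAj}\|_\infty\le\|G\|_\infty\le\lambda\gamma/3\le\Psi_{\min}C_{\min}/(4k)-\lambda$, so only a single gradient event needs probability (the paper splits $\delta$ over two events, not four).

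The second gap is that your Hessian step does not close at the claimed rate. With the entrywise accuracy $\epsilon'=\gamma C_{\min}/(12k^{3/2})$ that your row-sum conversion forces, and a union bound over $O((n+m)k)$ entries, Lemma~\ref{l2_app} yields the requirement $T\gtrsim \frac{C^4\eta^4}{(1-\rho)^4}\cdot\frac{k^3}{\gamma^2 C_{\min}^2}\log((n+m)k/\delta)$, which is cubic in $k$ and is not dominated by the proposition's bound~\eqref{lowerT} unless $k\lesssim\Psi_{\min}^{-2}$, which is not assumed. You explicitly flag this tension (``the row sums must be handled carefully rather than by bounding every entry by its worst case''), but you propose no mechanism that actually removes the extra factor of $k$, so under your recipe the stated sample complexity is not established. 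For comparison, the paper's proof arrives at this step at a bound scaling only linearly in $k$, namely~\eqref{lowerT4}, which is dominated by~\eqref{lowerT}; it does so by evaluating the tail exponent of Lemma~\ref{l2_app} at the norm-level accuracy $\gamma C_{\min}/(12\sqrt{k})$ rather than at the entrywise accuracy $\gamma C_{\min}/(12 k^{3/2})$ used in its own admissibility check, so this is the one place where a careful write-up must either justify that evaluation or accept a worse $k$-dependence. Either way, your proposal leaves the discrepancy between the $k^3$ you would obtain and the $k^2$ you must prove unresolved, and that is the crux of this step.
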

\begin{proof}
	The Lemmas~\ref{l1_app} and~\ref{l2_app} can be used to prove statement. The details are provided in Appendix~\ref{pr_correct}.
\end{proof}

The next lemma provides a deterministic upper bound on the estimation error in terms of the deviations of $M$ and $G$ from their mean.
\begin{lemma}\label{detbound_app}
	Assume that 
	\begin{equation}\label{cmin2}
	\vertiii{M_{\mAj, \mAj}-M^*_{\mAj, \mAj}}_{\infty}\leq \frac{\min\{1,2\eta^2\}}{2D_{\max}}
	\end{equation}
	and~\eqref{lsj} recovers the correct sparsity pattern of $\Psi^*_{:,j}$. Then, the following inequality holds for $E = \hat{\Psi}_{:,j}-\Psi^*_{:,j}$:
	\begin{align}\label{err}
	& E_{\mAj^c} = 0\nonumber\\
	&\|E_{\mAj}\|_{\infty}\!\leq\! \left(2D^2_{\max}\vertiii{M_{\mAj \mAj}\!\!-\!M^*_{\mAj \mAj}}_{\infty}\!\!+\!D_{\max}\right)\left(\|G_{\mAj}\|_{\infty}\!\!+\!\lambda\right)
	\end{align}
\end{lemma}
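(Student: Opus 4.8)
The plan is to treat the statement as the deterministic, purely optimization-theoretic core of the Lasso analysis, on the event that the support is recovered correctly. The first claim, $E_{\mAj^c}=0$, is immediate: correct sparsity recovery forces $\hat\Psi_{:,j}$ to be supported on $\mAj$, so $\hat\Psi_{ij}=0=\Psi^*_{ij}$ for every $i\in\mAj^c$ and hence $E_{\mAj^c}=0$. For the bound on $\|E_{\mAj}\|_\infty$, I would begin from the first-order optimality (KKT) condition of the convex program~\eqref{lsj}.

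Writing $\hat z$ for a subgradient of $\|\cdot\|_1$ at $\hat\Psi_{:,j}$ and substituting $Y=X\Psi^*_{:,j}+W_{:,j}$, so that $X\hat\Psi_{:,j}-Y=XE-W_{:,j}$, the stationarity condition $\frac1T X^\top(X\hat\Psi_{:,j}-Y)+\lambda\hat z=0$ rearranges to
$$
ME-G+\lambda\hat z=0.
$$
Restricting this vector identity to the rows indexed by $\mAj$ and using $E_{\mAj^c}=0$ (so that $(ME)_{\mAj}=M_{\mAj\mAj}E_{\mAj}$) gives $M_{\mAj\mAj}E_{\mAj}=G_{\mAj}-\lambda\hat z_{\mAj}$. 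Since the support is recovered exactly, every coordinate of $\hat\Psi_{\mAj,j}$ is nonzero, so $\hat z_{\mAj}$ is a vector of signs and $\|\hat z_{\mAj}\|_\infty\le1$.

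The crux is to invert $M_{\mAj\mAj}$ and control $\vertiii{(M_{\mAj\mAj})^{-1}}_\infty$. Setting $\Delta:=M_{\mAj\mAj}-M^*_{\mAj\mAj}$ and factoring $M_{\mAj\mAj}=M^*_{\mAj\mAj}\bigl(I+(M^*_{\mAj\mAj})^{-1}\Delta\bigr)$, I would combine Assumption A3 with the hypothesis~\eqref{cmin2} to obtain
$$
\vertiii{(M^*_{\mAj\mAj})^{-1}\Delta}_\infty\le D_{\max}\vertiii{\Delta}_\infty\le D_{\max}\cdot\frac{\min\{1,2\eta^2\}}{2D_{\max}}\le\frac12,
$$
which certifies invertibility through the Neumann series and also yields $\vertiii{(M_{\mAj\mAj})^{-1}}_\infty\le 2D_{\max}$. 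To recover the sharp affine form of the bound rather than a crude factor $2D_{\max}$, I would invoke the resolvent identity $(M_{\mAj\mAj})^{-1}-(M^*_{\mAj\mAj})^{-1}=-(M_{\mAj\mAj})^{-1}\Delta(M^*_{\mAj\mAj})^{-1}$ and feed the worst-case bound $\vertiii{(M_{\mAj\mAj})^{-1}}_\infty\le2D_{\max}$ into its right-hand side, producing
$$
\vertiii{(M_{\mAj\mAj})^{-1}}_\infty\le D_{\max}+2D^2_{\max}\vertiii{\Delta}_\infty.
$$

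Finally, taking the vector $\ell_\infty$-norm of $E_{\mAj}=(M_{\mAj\mAj})^{-1}(G_{\mAj}-\lambda\hat z_{\mAj})$ and using that $\vertiii{\cdot}_\infty$ is the operator norm induced by the vector norm $\|\cdot\|_\infty$ together with $\|\hat z_{\mAj}\|_\infty\le1$ delivers~\eqref{err} verbatim. I expect the only genuine obstacle to be the middle step—certifying invertibility of $M_{\mAj\mAj}$ and extracting the precise affine dependence on $\vertiii{\Delta}_\infty$; the remainder is bookkeeping with the KKT conditions and submultiplicativity of induced norms. The mildly unusual constant $\min\{1,2\eta^2\}$ in~\eqref{cmin2} plays no structural role in this lemma beyond forcing $\vertiii{(M^*_{\mAj\mAj})^{-1}\Delta}_\infty\le\tfrac12$, and any perturbation bound of this order would suffice.
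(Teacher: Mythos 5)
Your proof is correct and follows essentially the same route as the paper's: the KKT stationarity condition restricted to the recovered support $\mAj$, followed by a perturbation bound on $(M_{\mAj\mAj})^{-1}$ in the induced $\infty$-norm using Assumption A3 and~\eqref{cmin2}, yielding exactly the factor $2D^2_{\max}\vertiii{M_{\mAj\mAj}-M^*_{\mAj\mAj}}_\infty + D_{\max}$. The only cosmetic difference is that you obtain this factor via a Neumann-series bootstrap plus the resolvent identity---which has the small merit of making the invertibility of $M_{\mAj\mAj}$ explicit, something the paper takes for granted---whereas the paper first splits $\|E_{\mAj}\|_\infty$ by the triangle inequality into the $G$ and $\lambda S$ terms and then bounds $\vertiii{(M_{\mAj\mAj})^{-1}-(M^*_{\mAj\mAj})^{-1}}_\infty$ by solving a self-referential inequality.
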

\begin{proof}
	See Appendix~\ref{pr_detbound_app}.
\end{proof}

The next lemma shows that the condition of Proposition~\ref{detbound_app} holds with high probability, provided that $T$ is large enough.
\begin{proposition}\label{prop_inf_bound}
	Assume that
	\begin{equation}\label{T_lower}
	T\geq c_8\cdot\frac{D^2_{\max}C^4}{(1-\rho)^4}k^2\log(k/\delta)
	\end{equation}
	for some universal constant $c_5\geq 0$. Then, the following inequality holds with probability of at least $1-\delta$
	\begin{equation}
	\vertiii{M_{\mAj, \mAj}-M^*_{\mAj, \mAj}}_\infty\leq \frac{\min\{1,2\eta^2\}}{2D_{\max}}
	\end{equation}
\end{proposition}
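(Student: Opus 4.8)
The plan is to reduce the induced-$\infty$-norm control of the random submatrix to a family of scalar concentration statements, each supplied by Lemma~\ref{l2_app}, and then close the argument with a union bound. First I would recall that $\vertiii{\cdot}_\infty$ is the maximum absolute row sum and that $|\mathcal{A}_j|\le k$ by the definition of $k$. Consequently every row of $M_{\mAj\mAj}-M^*_{\mAj\mAj}$ has at most $k$ entries, so
\[
\vertiii{M_{\mAj\mAj}-M^*_{\mAj\mAj}}_\infty \le k\cdot\max_{i,l\in\mathcal{A}_j}\left|M_{il}-M^*_{il}\right|.
\]
It therefore suffices to force each entry deviation below $\epsilon := \frac{\min\{1,2\eta^2\}}{2kD_{\max}}$, since multiplying by $k$ reproduces exactly the target $\frac{\min\{1,2\eta^2\}}{2D_{\max}}$.

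Next I would apply Lemma~\ref{l2_app} to each of the at most $k^2$ entries of the submatrix with this $\epsilon$. Before invoking the lemma I must check its hypothesis $\epsilon\le\frac{4C^2\eta^2}{(1-\rho)^2}$; this holds because $\min\{1,2\eta^2\}\le 2\eta^2$ gives $\epsilon\le\frac{\eta^2}{kD_{\max}}$, and the prefactor $\frac{4C^2}{(1-\rho)^2}\ge 4$ dominates $\frac{1}{kD_{\max}}$ once we use $C\ge 1$, $D_{\max}\ge 1$, $k\ge 2$ and $0\le\rho<1$. With the admissibility condition verified, a union bound over the entries yields
\[
\mathbb{P}\left(\max_{i,l\in\mathcal{A}_j}|M_{il}-M^*_{il}|>\epsilon\right)\le 2k^2\exp\left(-c_5\frac{(1-\rho)^4}{C^4\eta^4}T\epsilon^2\right).
\]

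It then remains to pick $T$ large enough to drive this probability below $\delta$. Imposing $2k^2\exp(\cdots)\le\delta$ and inverting gives $T\ge\frac{C^4\eta^4}{c_5(1-\rho)^4\epsilon^2}\log(2k^2/\delta)$. Substituting $\epsilon=\frac{\min\{1,2\eta^2\}}{2kD_{\max}}$ produces the factor $\frac{\eta^4}{\min\{1,2\eta^2\}^2}$, which is precisely where the design of the threshold pays off: when $2\eta^2\le 1$ this factor equals $\tfrac14$, and it stays controlled by a constant in the regime $\eta=O(1)$, so it collapses into the universal constant. Combining this with the $4k^2D_{\max}^2$ coming from $1/\epsilon^2$ and the bound $\log(2k^2/\delta)\le 3\log(k/\delta)$ (valid for $k\ge 2$, $\delta<1$) gives the advertised threshold $T\ge c_8\frac{D_{\max}^2C^4}{(1-\rho)^4}k^2\log(k/\delta)$.

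The row-sum reduction and the union bound are mechanical; I expect the genuine care to lie in two places. The first is verifying the admissibility condition $\epsilon\le\frac{4C^2\eta^2}{(1-\rho)^2}$ of Lemma~\ref{l2_app}, which must be checked before the concentration bound may be used. The second, and more delicate, is tracking the interaction between the $\eta^{-4}$ rate in the concentration exponent and the $\min\{1,2\eta^2\}$ appearing in the per-entry target: it is exactly this cancellation that renders the final $T$-threshold free of explicit $\eta$-dependence, and the choice of the threshold $\frac{\min\{1,2\eta^2\}}{2D_{\max}}$ in Lemma~\ref{detbound_app} is what makes this work out cleanly.
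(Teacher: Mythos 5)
Your proof is correct and follows essentially the same route as the paper's: reduce the induced $\infty$-norm to entrywise deviations at scale $\epsilon/k$, invoke Lemma~\ref{l2_app} with a union bound over the at most $k^2$ entries of the $\mathcal{A}_j\times\mathcal{A}_j$ block (after verifying the lemma's admissibility condition using $C, D_{\max}\geq 1$), and invert the resulting tail bound to obtain the threshold on $T$. The only difference is bookkeeping: you spell out the inversion and the cancellation of $\eta^4$ against $\min\{1,2\eta^2\}^2$ (valid when $\eta = O(1)$, an assumption the paper makes implicitly), whereas the paper simply asserts that $c_8 = 16/c_5$ suffices.
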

\begin{proof}
	Notice that $|\mAj|\leq k$. One can verify that 
	\begin{equation}\label{eq114}
	\mathbb{P}\left(\vertiii{M_{\mAj, \mAj}-M^*_{\mAj, \mAj}}_\infty>\epsilon\right)\leq 2k^2\exp\left(-c_5\cdot\frac{(1-\rho)^4}{C^4\eta^4}\frac{T}{k^2}\epsilon^2\right)
	\end{equation}
	provided that $\frac{\epsilon}{k}\leq\frac{4C^2\eta^2}{(1-\rho)^2}$. Setting $\epsilon = \frac{\min\{1,2\eta^2\}}{2D_{\max}}$ and recalling that $D_{\max}, C\geq 1$, one can verify that $\frac{\epsilon}{k}\leq \frac{4C^2\eta^2}{(1-\rho)^2}$ is satisfied. Furthermore, by choosing $c_8 = \frac{16}{c_5}$, one can certify that~\eqref{T_lower} is enough to ensure that the right hand side of the above inequality is upper bounded by $\delta$, thereby completing the proof.
\end{proof}

\noindent{\it Proof of Theorem~\ref{thm1_single}:}
First note that~\eqref{ls2_single} can be decomposed into $n$ disjoint sub-problems over different columns of $\Psi$, each in the form of~\eqref{lsj}.
Consider the following choices for $\lambda$ and $T$:
\begin{align}
&\lambda = c_6\cdot\sqrt{\frac{C^2\eta^4}{\gamma^2T(1-\rho)^2}\log(4(n+m)/\delta)}\\
& T\geq \max\left\{c_7,c_8,\frac{1}{c_4},\frac{2}{c_5}\right\}\cdot\frac{C^4D^2_{\max}k^2}{\gamma^2C^2_{\min}\Psi^2_{\min}(1-\rho)^4}\log((n+m)/\delta)\label{lower_T}
\end{align}
where $c_4$, $c_5$, $c_6$, $c_7$, and $c_6$ are introduced in Lemmas~\ref{l1_app},~\ref{l2_app}, and Propositions~\ref{prop_recovery},~\ref{prop_inf_bound}. Based  on the Proposition~\ref{prop_recovery} and the above choices for $\lambda$ and $T$,~\eqref{lsj} recovers the sparsity pattern of $\Psi^*_{:,j}$ for a given column index $j$ with probability of at least $1-\delta$. Furthermore, based on Proposition~\ref{prop_inf_bound}, the lower bound on $T$ guarantees that the inequality
\begin{equation}\label{cmin}
\vertiii{Q_{\mAj, \mAj}-Q^*_{\mAj, \mAj}}_{\infty}\leq \frac{\min\{1,2\eta^2\}}{2D_{\max}}
\end{equation}
holds with probability of at least $1-\delta$. This, together with Proposition~\ref{detbound_app} results in
\begin{equation}\label{eq118}
\|E_{:,j}\|_{\infty}\leq \left({2D^2_{\max}}\vertiii{Q_{\mAj, \mAj}-Q^*_{\mAj, \mAj}}_\infty+D_{\max}\right)\left(\|G_{\mAj}\|_{\infty}+\lambda\right)
\end{equation}
with probability of at least $1-2\delta$. Now, it suffices to obtain concentration bounds for different terms of the above inequality. 
Based on~\eqref{eq114} and Lemma~\ref{l1_app}, one can write
\begin{align}
&\mathbb{P}\left(\|G_{\mAj}\|_{\infty}>\epsilon_1\right)\leq \exp\left(\log(2k)-c_4\cdot\frac{(1-\rho)^2}{C^2\eta^4}T\epsilon^2_1\right)\\
&\mathbb{P}\left(\vertiii{Q_{\mAj, \mAj}-Q^*_{\mAj, \mAj}}_{\infty}>\epsilon_2\right)\leq \exp\left(2\log(2k)-c_5\cdot\frac{(1-\rho)^4}{C^4\eta^4}\frac{T}{k^2}\epsilon^2_2\right)
\end{align}
This implies that, with the following choices
\begin{align}
& \epsilon_1(\zeta_1) = \sqrt{\zeta_1\cdot\frac{C^2\eta^4}{c_4T(1-\rho)^2}\log(2k)}\label{e2}\\
& \epsilon_2(\zeta_2) = \sqrt{\zeta_2\cdot\frac{C^4\eta^4k^2}{c_5T(1-\rho)^4}\log(2k)}\label{e1}
\end{align}
for any $\zeta_1>1, \zeta_2>2$ that satisfy
\begin{align}\label{eps_cond}
\epsilon_1(\zeta_1)\leq\frac{3C\eta^2}{1-\rho},\quad\epsilon_2(\zeta_2)\leq \frac{4C^2\eta^2}{(1-\rho)^2}k,
\end{align} 
we have
\begin{align}\label{prob_err}
\mathbb{P}\left(\|E_{:,j}\|_{\infty}\leq \left({2D^2_{\max}}\epsilon_2(\zeta_2)+D_{\max}\right)\left(\epsilon_1(\zeta_1)+\lambda\right)\right)\geq 1&-\exp\left(-(\zeta_2-2)\log(2k)\right)\nonumber\\
&-\exp\left(-(\zeta_1-1)\log(2k)\right)-2\delta
\end{align}
Note that the last term on the right hand side is due to a simple union bound on the events that~\eqref{cmin} holds and~\eqref{lsj} recovers the correct sparsity pattern of $\Psi^*_{:,j}$. Now, upon defining
\begin{align}
& \zeta_1 = \frac{\log(2/\delta)}{\log(2k)}+1\label{zeta1}\\
& \zeta_2 = \frac{\log(2/\delta)}{\log(2k)}+2\label{zeta2}
\end{align}
the inequalities in~\eqref{eps_cond} are satisfied, provided that $T\geq\max\{\frac{1}{c_4}, \frac{2}{c_5}\}\cdot\log(4k/\delta)$. Furthermore, combining~\eqref{zeta1} and~\eqref{zeta2} with~\eqref{prob_err} results in
\begin{align}
\mathbb{P}\left(\|E_{:,j}\|_{\infty}\leq \left({2D^2_{\max}}\epsilon_2(\zeta_2)+D_{\max}\right)\left(\epsilon_1(\zeta_1)+\lambda\right)\right)\geq 1-3\delta
\end{align}
After plugging~\eqref{zeta1} and~\eqref{zeta2} into~\eqref{e1} and~\eqref{e2}, the above inequality is reduced to
\begin{align}
\|E_{:,j}\|_{\infty}\leq& \left({2D^2_{\max}}\sqrt{\frac{2}{c_5}\cdot\frac{C^4\eta^4}{T(1-\rho)^4}k^2\log
	(4k/\delta)}+D_{\max}\right)\nonumber\\
&\times\left(\sqrt{\frac{1}{c_4}\cdot\frac{C^2\eta^4}{T(1-\rho)^2}\log(4k/\delta)}+c_6\sqrt{\frac{C^2\eta^4}{\gamma^2T(1-\rho)^2}\log(4(n+m)/\delta)}\right)
\end{align}
with probability of at least $1-3\delta$. Due to~\eqref{lower_T}, one can write
\begin{align}
{D^2_{\max}}\sqrt{\frac{2}{c_5}\cdot\frac{C^4\eta^4}{T(1-\rho)^4}k^2\log
	(4k/\delta)}\leq D_{\max}
\end{align}
Therefore,
\begin{align}
\|E_{:,j}\|_{\infty}\leq& 3D_{\max}\left(\frac{1}{\sqrt{c_4}}+c_6\right)\sqrt{\frac{C^2\eta^4}{\gamma^2T(1-\rho)^2}\log(4(n+m)/\delta)}\nonumber\\
=&\left(\frac{3}{\sqrt{c_4}}+3c_6\right)\frac{D_{\max}C\eta^2}{\gamma(1-\rho)}\sqrt{\frac{\log(4(n+m)/\delta)}{T}}
\end{align}
with probability of at least $1-3\delta$. Now, to conclude the proof, it suffices to perform a union bound on different columns of the solution with indices $1\leq j\leq n$. This results in 
\begin{equation}
\|E\|_{\infty}\leq \left(\frac{3}{\sqrt{c_4}}+3c_6\right)\frac{D_{\max}C\eta^2}{\gamma(1-\rho)}\sqrt{\frac{\log(4(n+m)/\delta)}{T}}
\end{equation}
with probability of at least $1-3n\delta$. Replacing $\delta$ with $\frac{\delta}{3n}$ in the above inequality concludes the proof.\qed

\section{Proof of Auxiliary Lemmas}
\subsection{Proof of Lemma~\ref{l1_app}}\label{pr_l1_app}
To prove this lemma, we first introduce some notations. Define the matrix
\begin{equation}\small
R_1(X(\tau)) = \begin{bmatrix}
0 & 0 & \dots & 0 & 0 & 0 & \dots & 0 & 0\\
\vdots & \vdots & \ddots & \vdots & \vdots & \vdots & \ddots & \vdots & \vdots \\
0 & 0 & \dots & 0 & 0 & 0 & \dots & 0 & 0\\
X(T_0) & X(T_0-1) & \dots & X(1) & X(0) & 0 & \dots & 0 & 0\\
X(T_0+1) & X(T_0) & \dots & X(2) & X(1) & X(0) & \dots & 0 & 0\\
\vdots & \vdots & \ddots & \vdots & \vdots & \vdots & \ddots & \vdots & \vdots \\
X(T_0+T-1) & X(T_0+T-2) & \dots & X(T) & X(T-1) & X(T-2) & \dots & X(0) & 0\\
\end{bmatrix}
\end{equation} 
where $X(\tau)$ is a matrix valued time-dependent signal. Furthermore, define the symmetrized matrix $\tilde{R}_1(\cdot) = \left({R}_1(\cdot)+{R}_1(\cdot)^T\right)/2$. Finally, for a matrix $N$, define $[N]_{i\rightarrow j}$ as a matrix with the same size as $H$ and with all rows equal to zero except for the $j^{th}$ row which is equal to the $i^{th}$ row of $N$. 
\begin{lemma}\label{l_bound_G}
	Let $\lambda_k$ be the $k^{th}$ eigenvalue of the matrix $R_G$ defined as
	\begin{align}
	R_G =& \begin{bmatrix}
	\tilde{R}_1\left(\left[(A+BK)^\tau\right]_{i\rightarrow j}\right)\eta^2 & \frac{1}{2}{R}_1\left(\left[(A+BK)^\tau B\right]_{i\rightarrow j}\right)\eta^2\\
	\frac{1}{2}{R}_1\left(\left[(A+BK)^\tau B\right]_{i\rightarrow j}\right)^T\eta^2 & 0
	\end{bmatrix}
	\end{align}
	Then, the following relations hold
	\begin{align}
	&\max_{k}|\lambda_k|\leq \frac{3}{2}\frac{C\eta^2}{1-\rho}\\
	&\sum_{k}^{(n+m)(T+T_0+1)}\lambda_k^2\leq\frac{9}{2}\frac{C^2\eta^4T}{(1-\rho)^2}
	\end{align}
\end{lemma}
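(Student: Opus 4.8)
The plan is to read off the structure of $R_G$: each occurrence of $R_1(\cdot)$ is a block lower-triangular \emph{Toeplitz} matrix whose block on the $\tau$-th diagonal is a single-row matrix, either $[(A+BK_0)^\tau]_{i\to j}$ in the $(1,1)$ position or $[(A+BK_0)^\tau B]_{i\to j}$ in the off-diagonal position. All of the work then reduces to bounding the spectral and Frobenius norms of such block-Toeplitz matrices using the geometric decay of the closed-loop impulse response.

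\textbf{Step 1 (per-block bounds).} First I would note that deleting all but one row, $N\mapsto[N]_{i\to j}$, leaves a rank-one matrix whose spectral and Frobenius norms both equal the Euclidean norm of the $i$-th row of $N$, and hence are at most $\vertiii{N}$. Combined with the standing decay estimates $\vertiii{(A+BK_0)^\tau}\le C\rho^\tau$ and $\vertiii{(A+BK_0)^\tau B}\le C\rho^\tau$, this gives, for every $\tau$,
\[
\vertiii{[(A+BK_0)^\tau]_{i\to j}}\ =\ \big\|[(A+BK_0)^\tau]_{i\to j}\big\|_F\ \le\ C\rho^\tau,
\]
and the same bound for the $B$-weighted blocks.

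\textbf{Step 2 (spectral norm).} For the top eigenvalue I would use the standard fact that the induced $2$-norm of a block-banded Toeplitz matrix is at most the sum of the spectral norms of its block diagonals (a discrete Young / convolution inequality). This yields
\[
\vertiii{R_1(\cdot)}\ \le\ \sum_{\tau\ge 0}\vertiii{[(A+BK_0)^\tau]_{i\to j}}\ \le\ \sum_{\tau\ge 0}C\rho^\tau\ =\ \frac{C}{1-\rho},
\]
and the identical bound for the $B$-weighted matrix; symmetrization only helps, since $\vertiii{\tilde R_1(\cdot)}\le\vertiii{R_1(\cdot)}$. I would then split $R_G$ into its diagonal part $\mathrm{diag}(\tilde R_1(\cdot)\eta^2,0)$ and its off-diagonal part, and use the triangle inequality together with the fact that a symmetric matrix with a single off-diagonal block $Q$ has spectral norm exactly $\vertiii{Q}$. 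This gives $\max_k|\lambda_k|=\vertiii{R_G}\le \eta^2\frac{C}{1-\rho}+\tfrac12\eta^2\frac{C}{1-\rho}=\tfrac32\frac{C\eta^2}{1-\rho}$.

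\textbf{Step 3 (Frobenius norm).} For $\sum_k\lambda_k^2=\|R_G\|_F^2=\eta^4\|\tilde R_1(\cdot)\|_F^2+\tfrac12\eta^4\|R_1(\cdot)\|_F^2$ I would count multiplicities: each single-row block $[(A+BK_0)^\tau]_{i\to j}$ appears in at most $T$ of the (nonzero) block rows of $R_1(\cdot)$, so summing squared block norms gives
\[
\|R_1(\cdot)\|_F^2\ \le\ T\sum_{\tau\ge0}\big\|[(A+BK_0)^\tau]_{i\to j}\big\|_F^2\ \le\ T\sum_{\tau\ge0}C^2\rho^{2\tau}\ \le\ \frac{TC^2}{(1-\rho)^2},
\]
using $\|\tilde R_1(\cdot)\|_F\le\|R_1(\cdot)\|_F$ and $\tfrac{1}{1-\rho^2}\le\tfrac{1}{(1-\rho)^2}$. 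Combining the two contributions and tracking constants loosely yields $\sum_k\lambda_k^2\le\tfrac92\frac{C^2\eta^4 T}{(1-\rho)^2}$.

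\textbf{Main obstacle.} The one genuinely non-routine ingredient is the operator-norm bound for the block-Toeplitz matrix in Step 2: one must argue that a (possibly infinite, as $T_0\to\infty$) block-banded Toeplitz operator has induced $2$-norm controlled by the $\ell_1$-sum of the norms of its block diagonals, and verify that this bound passes to the limit $T_0\to\infty$. Everything else---the rank-one reduction of the single-row blocks, the $2\times2$ block decomposition, and the multiplicity count for the Frobenius norm---is bookkeeping, with the gap between my $\tfrac32$ and the stated $\tfrac92$ absorbed into loose constant tracking.
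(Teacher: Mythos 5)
Your proof is correct, and its first half essentially coincides with the paper's: the paper also splits $R_G$ into the block-diagonal part $\tilde{R}_1(\cdot)\eta^2$ and the anti-diagonal part, uses the fact that a symmetric matrix with a single off-diagonal block $Q$ has spectral norm $\vertiii{Q}$, and adds the bounds $\vertiii{\tilde{R}_1(\cdot)},\vertiii{R_1(\cdot)}\leq \frac{C}{1-\rho}$ to get $\frac{3}{2}\frac{C\eta^2}{1-\rho}$. The one difference there is that the paper does not prove the Toeplitz operator-norm bound at all — it imports it from Lemma A.3 of the cited reference \cite{pereira2010learning} — whereas you supply the block-diagonal decomposition / discrete Young's inequality argument yourself; the step you flag as the ``main obstacle'' is exactly the step the paper outsources, and your sketch of it (norm of each block diagonal equals the maximum block norm, then sum the geometric series) is sound. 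Where you genuinely diverge is the second bound: the paper never computes a Frobenius norm. Instead it observes that $R_G$ has rank at most $2T$ (it is the sum of a matrix with only $T$ nonzero block rows and the transpose of such a matrix), so that $\sum_k\lambda_k^2\leq \mathrm{rank}(R_G)\cdot\max_k|\lambda_k|^2\leq 2T\cdot\frac{9}{4}\frac{C^2\eta^4}{(1-\rho)^2}$, which is precisely where the constant $\frac{9}{2}$ comes from. Your multiplicity-count bound $\|R_1(\cdot)\|_F^2\leq T\sum_{\tau\geq 0}C^2\rho^{2\tau}$ is valid and in fact sharper (it yields $\frac{3}{2}$ in place of $\frac{9}{2}$), and it has the further merit of being logically independent of the spectral-norm estimate; the paper's rank-times-max-eigenvalue argument buys brevity and is the pattern reused verbatim in the companion lemmas (where rank bounds of $4T$ and $2T$ are again combined with the eigenvalue bound), at the price of a looser constant.
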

\begin{proof}
	Notice that
	\begin{equation}
	\|R_G\|\leq \eta^2\left\|\tilde{R}_1\left(\left[(A+BK)^\tau\right]_{i\rightarrow j}\right)\right\| + \frac{1}{2}\eta^2\left\|{R}_1\left(\left[(A+BK)^\tau B\right]_{i\rightarrow j}\right)\right\|
	\end{equation}
	Similar to the proof of Lemma A.3 in~\cite{pereira2010learning}, one can verify that
	\begin{align}
	&\left\|\tilde{R}_1\left(\left[(A+BK)^\tau\right]_{i\rightarrow j}\right)\right\|\leq \frac{C}{1-\rho}\\
	&\left\|{R}_1\left(\left[(A+BK)^\tau B\right]_{i\rightarrow j}\right)\right\|\leq \frac{C}{1-\rho}
	\end{align}
	This completes the proof of the second statement. Finally, it is easy to see that the rank of $R_G$ is upper bounded by $2T$. This, together with the bound on the maximum eigenvalue completes the proof of the third statement.
\end{proof}
Define the matrix $P_{ji}\in\mathbb{R}^{n(T+T_0+1)\times m(T+T_0+1)}$ as
\begin{align}
P_{ji} = \begin{bmatrix}
0_{(T_0+1)\times (T_0+1)} & 0_{(T_0+1)\times T}\\
0_{T\times (T_0+1)} & I_{T\times T}
\end{bmatrix}\otimes E_{ji}
\end{align}
where $E_{ji}\in\mathbb{R}^{n\times m}$ is a 0-1 matrix with 1 at its $(j,i)^{th}$ entry and 0 otherwise. 
\begin{lemma}\label{l_bound_G2}
	Let $\lambda_k$ be the $k^{th}$ eigenvalue of the matrix $\tilde R_G$ defined as
	\begin{align}
	\tilde R_G =& \begin{bmatrix}
	\tilde{R}_1\left(\left[K(A+BK)^\tau\right]_{i\rightarrow j}\right)\eta^2 & \frac{1}{2}{R}_1\left(\left[K(A+BK)^\tau B\right]_{i\rightarrow j}\right)\eta^2+\frac{1}{2}P_{ji}\eta^2\\
	\frac{1}{2}{R}_1\left(\left[K(A+BK)^\tau B\right]_{i\rightarrow j}\right)^T\eta^2+\frac{1}{2}P_{ji}^T\eta^2 & 0
	\end{bmatrix}
	\end{align}
	Then, the following relations hold
	\begin{align}
	&\max_{k}|\lambda_k|\leq \frac{2C\eta^2}{1-\rho}\\
	&\sum_{k}^{(n+m)(T+T_0+1)}\lambda_k^2\leq\frac{16C^2\eta^4T}{(1-\rho)^2}
	\end{align}
\end{lemma}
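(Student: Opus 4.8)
The plan is to mirror the proof of Lemma~\ref{l_bound_G} essentially line by line, viewing $\tilde R_G$---the representation matrix for the input-coordinate entries $G_i=z^\top\tilde R_G z$---as the same type of object built from the feedback-premultiplied closed-loop powers $K_0(A+BK_0)^\tau$, augmented by the block $\tfrac12\eta^2 P_{ji}$ (and its transpose) that accounts for the direct feedthrough term $v(t)$ in $u(t)$. Since $\tilde R_G$ is symmetric, $\max_k|\lambda_k|=\|\tilde R_G\|$, and I would bound this by splitting off the symmetric embedding of $P_{ji}$ and applying the triangle inequality together with the elementary facts that a symmetric block matrix with zero $(2,2)$ block has spectral norm equal to that of its off-diagonal block, while a block-diagonal summand adds at most its own spectral norm. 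This gives
\[
\|\tilde R_G\|\le \eta^2\bigl\|\tilde R_1([K_0(A+BK_0)^\tau]_{i\to j})\bigr\|+\tfrac12\eta^2\bigl\|R_1([K_0(A+BK_0)^\tau B]_{i\to j})\bigr\|+\tfrac12\eta^2\|P_{ji}\|.
\]

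The first two terms are bounded exactly as in Lemma~\ref{l_bound_G} (equivalently Lemma~A.3 of \cite{pereira2010learning}): the banded block-Toeplitz structure makes the operator norm no larger than the $\ell_1$-sum of the generating blocks, and the geometric-decay bounds $\vertiii{K_0(A+BK_0)^\tau}\le C\rho^\tau$ and $\vertiii{K_0(A+BK_0)^\tau B}\le C\rho^\tau$ assumed in the setup then sum to $\sum_{\tau\ge0}C\rho^\tau=\frac{C}{1-\rho}$ for each. For the third term, since $P_{ji}=\mathrm{diag}(0,I_T)\otimes E_{ji}$ is the Kronecker product of an orthogonal projection with the single-entry matrix $E_{ji}$, its spectral norm factorizes as $\|P_{ji}\|=1\cdot\|E_{ji}\|=1$. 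Using $C\ge1$ and $0\le\rho<1$, so that $\tfrac12\le\tfrac12\frac{C}{1-\rho}$, I would collect
\[
\|\tilde R_G\|\le \eta^2\!\left(\frac{C}{1-\rho}+\frac12\frac{C}{1-\rho}+\frac12\right)\le\frac{2C\eta^2}{1-\rho},
\]
which is the claimed bound on $\max_k|\lambda_k|$.

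For $\sum_k\lambda_k^2=\|\tilde R_G\|_F^2$ I would reuse the rank count. After the $[\,\cdot\,]_{i\to j}$ operation each block-Toeplitz factor keeps only one nonzero scalar row per block, so $R_1(\cdot)$ and $\tilde R_1(\cdot)$ inherit the same rank bound as in Lemma~\ref{l_bound_G}, while $P_{ji}$ is an additional matrix of rank $T$ whose symmetric embedding contributes a further $2T$. Hence $\mathrm{rank}(\tilde R_G)\le 4T$, twice the bound for $R_G$, and therefore
\[
\sum_k\lambda_k^2\le \mathrm{rank}(\tilde R_G)\,\max_k\lambda_k^2\le 4T\cdot\frac{4C^2\eta^4}{(1-\rho)^2}=\frac{16C^2\eta^4T}{(1-\rho)^2},
\]
as claimed.

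The only genuinely non-mechanical step is bounding the spectral norm of the (infinite, as $T_0\to\infty$) banded block-Toeplitz operators $R_1$ and $\tilde R_1$ by the summed norms of their generating blocks. This is exactly the estimate already performed for $R_G$ in Lemma~\ref{l_bound_G} and in Lemma~A.3 of \cite{pereira2010learning}, so the work reduces to verifying that it carries over verbatim once $(A+BK_0)^\tau$ is replaced by $K_0(A+BK_0)^\tau$, and to confirming that the feedthrough block $P_{ji}$ is harmless, contributing at most $1$ to the norm and at most $2T$ to the rank. Everything else is a triangle inequality and a geometric series.
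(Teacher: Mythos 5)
Your proposal is correct and takes essentially the same route as the paper: the paper also obtains the eigenvalue bound by reducing to Lemma~\ref{l_bound_G} (the triangle-inequality split into the two Toeplitz blocks, bounded as in Lemma A.3 of~\cite{pereira2010learning}, plus the extra $P_{ji}$ term of norm at most one), and gets the second bound from $\mathrm{rank}(\tilde R_G)\leq 4T$ combined with the maximum-eigenvalue bound. Your write-up merely fills in the details that the paper compresses into ``follows directly from Lemma~\ref{l_bound_G}'' and ``easy to verify.''
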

\begin{proof}
	The proof of the first statement follows directly from Lemma~\ref{l_bound_G}. Furthermore, it is easy to verify that the rank of $\tilde{R}_G$ is upper bounded by $4T$. This, together with the upper bound on the maximum eigenvalue completes the proof of the third statement.
\end{proof}

\noindent{\it Proof of Lemma~\ref{l1_app}:} One can easily verify that
\begin{itemize}
	\item[-] if $i\in\{1,2,\dots, n\}$, then $G_i =\frac{1}{T} X_{:,i}^TW_{:,j} = \frac{1}{T}z^TR_Gz$
	where $z\in\mathbb{R}^{(n+m)(T+T_0+1)}$ is a random vector with independent zero-mean sub-Gaussian elements and $\|z\|_{\psi}\leq 1$.
	\item[-] if $i\in\{n+1,\dots, n+m\}$, then $G_i =\frac{1}{T} X_{:,i}^TW_{:,j} = \frac{1}{T}z^T\tilde R_Gz$
	where $z\in\mathbb{R}^{(n+m)(T+T_0+1)}$ is a random vector with independent zero-mean sub-Gaussian elements and $\|z\|_{\psi}\leq 1$.
\end{itemize}
Furthermore, note that the diagonal entries of both $R_G$ and $\tilde R_G$ are zero and hence, $\mathbb{E}\left\{\frac{1}{T}z^TR_Gz\right\} = \mathbb{E}\left\{\frac{1}{T}z^T\tilde R_Gz\right\} = 0$. This, together with Hanson-Wright inequality and Lemmas~\ref{l_bound_G} and~\ref{l_bound_G2} completes the proof.\qed

\subsection{Proof of Lemma~\ref{l2_app}}\label{pr_l2_app}
Define the matrix
\begin{equation}\small
R_2(X(\tau)) = \begin{bmatrix}
X(T_0) & X(T_0-1) & \dots & X(1) & X(0) & 0 & \dots & 0 & 0\\
X(T_0+1) & X(T_0) & \dots & X(2) & X(1) & X(0) & \dots & 0 & 0\\
\vdots & \vdots & \ddots & \vdots & \vdots & \vdots & \ddots & \vdots\\
X(T_0+T-1) & X(T_0+T-2) & \dots & X(T) & X(T-1) & X(T-2) & \dots & X(0) & 0\\
\end{bmatrix}
\end{equation} 
and
\begin{align}
&H_{1i} = R_2\left(\left[(A+BK_0)^\tau\right]_{i,:}\right)\eta\in\mathbb{R}^{T\times n(T+T_0+1)}\nonumber\\
&H_{1j} = R_2\left(\left[(A+BK_0)^\tau\right]_{j,:}\right)\eta\in\mathbb{R}^{T\times n(T+T_0+1)}\nonumber\\
& H_{2i} = R_2\left(\left[(A+BK_0)^\tau B\right]_{i,:}\right)\eta\in\mathbb{R}^{T\times m(T+T_0+1)}\nonumber\\
& H_{2j} = R_2\left(\left[(A+BK_0)^\tau B\right]_{j,:}\right)\eta\in\mathbb{R}^{T\times m(T+T_0+1)}\nonumber\\
&H_{3i} = R_2\left(\left[K_0(A+BK_0)^\tau\right]_{i,:}\right)\eta\in\mathbb{R}^{T\times n(T+T_0+1)}\nonumber\\
&H_{3j} = R_2\left(\left[K_0(A+BK_0)^\tau\right]_{j,:}\right)\eta\in\mathbb{R}^{T\times n(T+T_0+1)}\nonumber\\
& H_{4i} = R_2\left(\left[K_0(A+BK_0)^\tau B\right]_{i,:}\right)\eta^2+P_i\eta\in\mathbb{R}^{T\times m(T+T_0+1)}\nonumber\\
& H_{4j} = R_2\left(\left[K_0(A+BK_0)^\tau B\right]_{j,:}\right)\eta^2+P_j\eta\in\mathbb{R}^{T\times m(T+T_0+1)}
\end{align}
where the matrix $P_{j}\in\mathbb{R}^{T\times m(T+T_0+1)}$ has the form
\begin{align}
P_{j} = \begin{bmatrix}
0_{T\times (T_0+1)} & I_{T\times T}
\end{bmatrix}\otimes e_j
\end{align}
and $e_j\in\mathbb{R}^{1\times m}$ with $1$ at its $j^{th}$ entry and 0 otherwise.
These notations will be used in the subsequent lemma.
\begin{lemma}\label{l_bound_M}
	Let $\{k_1,k_2,k_3,k_4\}\in\{1,2,3,4\}^4$, where $k_1\not=k_4$ and $k_2\not=k_3$. Furthermore, let $\lambda_k$ be the $k^{th}$ eigenvalue of the following matrix
	\begin{align}
	R_M(k_1,k_2,k_3,k_4) =& \begin{bmatrix}
	\frac{1}{2}(H_{k_1i}^\top H_{k_3j}+H_{k_3j}^\top H_{k_1i}) & \frac{1}{2}(H_{k_1i}^\top {H}_{k_4j}+H_{k_3j}^\top {H}_{k_2i})\\
	\frac{1}{2}( H_{k_4j}^\top {H}_{k_1i}+ H_{k_2i}^\top {H}_{k_3j}) & \frac{1}{2}( H_{k_2i}^\top H_{k_4j}+ H_{k_4j}^\top H_{k_2i})
	\end{bmatrix}\nonumber\\
	&\in\mathbb{R}^{(n+m)(T+T_0+1)\times (n+m)(T+T_0+1)}
	\end{align}
	Then, the following relations hold
	\begin{align}
	&\max_{k}|\lambda_k|\leq \frac{6C^2\eta^2}{(1-\rho)^2}\\
	&\sum_{k=1}^{(n+m)(T+T_0+1)}\lambda_k^2\leq \frac{72C^4\eta^4}{(1-\rho)^4}
	\end{align}
\end{lemma}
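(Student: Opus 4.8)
The plan is to follow the template of Lemmas~\ref{l_bound_G} and~\ref{l_bound_G2}: express $M_{ij}$ as a quadratic form in the stacked noise vector $z=[z_w;z_v]$ and then bound the spectrum of the kernel $R_M$ uniformly over the relevant index tuples. Since a state column of the design obeys $X_{:,i}=H_{1i}z_w+H_{2i}z_v$ and an input column obeys $X_{:,i}=H_{3i}z_w+H_{4i}z_v$, one obtains $X_{:,i}^\top X_{:,j}=z^\top R_M z$ with
\begin{equation}\nonumber
R_M=\tfrac12\!\left(R+R^\top\right),\qquad R=\begin{bmatrix}H_{k_1i}^\top\\ H_{k_2i}^\top\end{bmatrix}\begin{bmatrix}H_{k_3j}&H_{k_4j}\end{bmatrix}=:UV .
\end{equation}
The four pairings (state/input for $i$ and $j$) correspond to $(k_1,k_2,k_3,k_4)\in\{(1,2,1,2),(1,2,3,4),(3,4,1,2),(3,4,3,4)\}$, and each satisfies $k_1\neq k_4$ and $k_2\neq k_3$; so it suffices to bound $\vertiii{R_M}$ and $\|R_M\|_F$ for a generic admissible tuple.

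For the first inequality I would use $\max_k|\lambda_k|=\vertiii{R_M}\le\vertiii{R}\le\vertiii{U}\,\vertiii{V}$, where the first step follows from $\vertiii{\tfrac12(R+R^\top)}\le\vertiii{R}$ and the second from submultiplicativity. Because $U^\top U=H_{k_1i}H_{k_1i}^\top+H_{k_2i}H_{k_2i}^\top$ is a sum of two positive semidefinite matrices, $\vertiii{U}^2\le\vertiii{H_{k_1i}}^2+\vertiii{H_{k_2i}}^2$, and symmetrically $\vertiii{V}^2\le\vertiii{H_{k_3j}}^2+\vertiii{H_{k_4j}}^2$. Each $\vertiii{H_{ki}}$ is in turn a constant multiple of $\frac{C\eta}{1-\rho}$: this is exactly the operator-norm estimate $\vertiii{R_2([\,\cdot\,]_{i,:})}\le\frac{C}{1-\rho}$ for the four decaying matrix sequences that underlies Lemma~\ref{l_bound_G}, together with $\vertiii{P_i}\le1$ for the feedthrough block of $H_{4i}$ and the conventions $C\ge1$, $0\le\rho<1$. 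Collecting these and absorbing the $\sqrt2$ stacking factors and the feedthrough contribution into the universal constant yields $\vertiii{R_M}\le\frac{6C^2\eta^2}{(1-\rho)^2}$, which is the first claim.

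The second inequality, $\sum_k\lambda_k^2=\|R_M\|_F^2\le\frac{72C^4\eta^4}{(1-\rho)^4}$, is the crux, and I expect it to be the main obstacle. Observe that the target equals exactly $2\vertiii{R_M}^2$ with the spectral bound above and, crucially, is \emph{free of} $T$. The route taken for the Frobenius bounds in Lemmas~\ref{l_bound_G} and~\ref{l_bound_G2}, namely $\|R_M\|_F^2\le\mathrm{rank}(R_M)\cdot\vertiii{R_M}^2$, is \emph{not} usable here: the product $R=UV$ contracts over the length-$T$ time index, so only $\mathrm{rank}(R_M)\le 2T$ is generically available, and feeding that in would introduce a spurious factor $T$ (this is precisely the discrepancy one must avoid). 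The real work is therefore to obtain a $T$-independent Frobenius bound by exploiting the specific outer-product symmetrization $R_M=\tfrac12(UV+V^\top U^\top)$ rather than the ambient rank, showing that $\|R_M\|_F$ is controlled by a constant multiple of $\vertiii{R_M}$ so as to match the target $2\vertiii{R_M}^2$. Identifying and justifying the structural mechanism that collapses the Frobenius mass of $R_M$ onto a $T$-independent number of directions — uniformly over the four index cases and with the feedthrough term present — is the step I anticipate will demand the most care, and it is where the argument departs from the analogous (and genuinely $T$-dependent) estimates of Lemmas~\ref{l_bound_G} and~\ref{l_bound_G2}.
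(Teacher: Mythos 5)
Your handling of the first inequality is sound and is essentially the paper's own argument (the paper bounds the blocks of $R_M$ directly via $\vertiii{H_{ri}},\vertiii{H_{rj}}\le\frac{C}{1-\rho}$ for $r=1,2,3$ and $\le\frac{2C}{1-\rho}$ for $r=4$; your $\vertiii{R_M}\le\vertiii{U}\,\vertiii{V}$ route is an equivalent variant). The genuine gap is the second inequality: you never prove it, and the ``structural mechanism'' you defer to --- a $T$-independent bound on $\sum_k\lambda_k^2$ --- does not exist, because the $T$-free bound as printed is false. Take the admissible tuple $(k_1,k_2,k_3,k_4)=(1,2,1,2)$ with $i=j\le n$, so that $z^\top R_M z=\|X_{:,i}\|_2^2$. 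Since $x_i(t)=w_i(t-1)+(\text{terms independent of }w(t-1))$, we have $\mathbb{E}\{x_i(t)^2\}\ge\sigma_w^2$ for every $t$, hence $\mathbb{E}\{z^\top R_M z\}\ge T\sigma_w^2$. Writing $R_M=\sum_k\lambda_k u_ku_k^\top$ and $\Sigma_z$ for the covariance of $z$ (which satisfies $\Sigma_z\preceq 2I$ by the definition of the sub-Gaussian norm and $\|z\|_\psi\le1$),
\begin{equation*}
T\sigma_w^2\;\le\;\mathbb{E}\{z^\top R_M z\}\;=\;\sum_k\lambda_k\, u_k^\top\Sigma_z u_k\;\le\;2\sum_k|\lambda_k|\;\le\;2\sqrt{\mathrm{rank}(R_M)}\Bigl(\sum_k\lambda_k^2\Bigr)^{1/2}\;\le\;2\sqrt{2T}\,\Bigl(\sum_k\lambda_k^2\Bigr)^{1/2},
\end{equation*}
so $\sum_k\lambda_k^2\ge \sigma_w^4\,T/8$: the sum of squared eigenvalues necessarily grows linearly in $T$ whenever the disturbance has nonzero variance.

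The resolution is that the lemma statement carries a typo --- the right-hand side should read $\frac{72C^4\eta^4T}{(1-\rho)^4}$, with the same factor of $T$ that appears explicitly in the companion bounds of Lemmas~\ref{l_bound_G} and~\ref{l_bound_G2} --- and the paper's own proof is precisely the rank argument you discarded: $R_M=\frac12\bigl(UV+(UV)^\top\bigr)$ with $U\in\mathbb{R}^{(n+m)(T+T_0+1)\times T}$, hence $\mathrm{rank}(R_M)\le 2T$ and $\sum_k\lambda_k^2\le 2T\,\vertiii{R_M}^2\le\frac{72C^4\eta^4T}{(1-\rho)^4}$. This factor of $T$ is not spurious downstream: since $M_{ij}=\frac1T z^\top R_M z$, feeding the factor-$T$ bound into the Hanson--Wright inequality gives an exponent $-c\,(T\epsilon)^2\big/\bigl(\tfrac{72C^4\eta^4}{(1-\rho)^4}T\bigr)=-c'\,\tfrac{(1-\rho)^4}{C^4\eta^4}\,T\epsilon^2$ together with a constant-order validity window for $\epsilon$, which is exactly the form of Lemma~\ref{l2_app}; a $T$-free Frobenius bound would instead produce a $T^2\epsilon^2$ exponent with a validity window shrinking like $1/T$, inconsistent with how that lemma is stated and used. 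So the correct move was to flag the missing $T$ and run the rank argument, rather than to search for a collapse of the Frobenius mass that cannot occur.
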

\begin{proof}
	To show the validity of the first statement, one can write
	\begin{align}
	&\vertiii{R_M(k_1,k_2,k_3,k_4)}\nonumber\\
	&\leq \frac{1}{2}\max\{\vertiii{H_{k_1i}^\top H_{k_3j}+H_{k_3j}^\top H_{k_1i}} , \vertiii{H_{k_2i}^\top H_{k_4j}+ H_{k_4j}^\top H_{k_2i}}\} + \frac{1}{2}\vertiii{H_{k_1i}^\top {H}_{k_4j}+H_{k_3j}^\top {H}_{k_2i}}\nonumber\\
	&\leq\frac{1}{2}\max\{\vertiii{H_{k_1i}^\top}\vertiii{H_{k_3j}}+\vertiii{H_{k_3j}^\top}\vertiii{H_{k_1i}} , \vertiii{H_{k_2i}^\top} \vertiii{H_{k_4j}}+ \vertiii{H_{k_4j}^\top} \vertiii{H_{k_2i}}\}\nonumber\\
	&\ \ \ + \frac{1}{2}\left(\vertiii{H_{k_1i}^\top}\vertiii{{H}_{k_4j}}+\vertiii{H_{k_3j}^\top}\vertiii{{H}_{k_2i}}\right)
	\end{align}
	Furthermore, similar to the proof of Lemma A.4 in~\cite{pereira2010learning}, one can verify that
	\begin{align}
		&\vertiii{H_{ri}}, \vertiii{H_{rj}}\leq\frac{C}{1-\rho} && \text{if}\ \ r = 1,2,3\nonumber\\
		&\vertiii{H_{ri}}, \vertiii{H_{rj}}\leq\frac{2C}{1-\rho} && \text{if}\ \ r = 4\nonumber
	\end{align} 
	Combining this with the above inequality completes the proof of the first statement. Finally, note that $R_M(k_1,k_2,k_3,k_4)$ can be written as 
	\begin{align}
	R^{(1)}_M = \frac{1}{2}\begin{bmatrix}
	H_{k_1i}^\top\\
	H_{k_2i}^\top
	\end{bmatrix}\begin{bmatrix}
	H_{k_3j} & H_{k_4j}
	\end{bmatrix}+\frac{1}{2}\begin{bmatrix}
	H_{k_3j}^\top\\
	H_{k_4j}^\top
	\end{bmatrix}\begin{bmatrix}
	H_{k_1i} & H_{k_2i}
	\end{bmatrix}
	\end{align}
	which implies that its rank is upper bounded by $2T$. This, together with the upper bound on the maximum eigenvalue completes the proof.
\end{proof}
\begin{lemma}\label{l_expected_M}
	We have $\mathbb{E}(M) = M^*$. 
\end{lemma}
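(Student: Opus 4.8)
The plan is to expand $M=\frac{1}{T}X^\top X$ as a time-indexed sum of block outer products, take expectations term by term, and verify that each term equals the stationary covariance $M^*$. First I would note that the $t$-th row of $X$ is $\begin{bmatrix} x(t)^\top & u(t)^\top\end{bmatrix}$, so that
\begin{equation}\nonumber
M=\frac{1}{T}X^\top X=\frac{1}{T}\sum_{t=0}^{T-1}\begin{bmatrix} x(t)\\ u(t)\end{bmatrix}\begin{bmatrix} x(t)^\top & u(t)^\top\end{bmatrix}.
\end{equation}
By linearity of expectation, it then suffices to show that $\mathbb{E}\left\{\begin{bmatrix} x(t)\\ u(t)\end{bmatrix}\begin{bmatrix} x(t)^\top & u(t)^\top\end{bmatrix}\right\}=M^*$ for every $0\leq t\leq T-1$, since averaging $T$ identical terms returns $M^*$.

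Second, I would establish that the stationary second moment of $x(t)$ is time-invariant and equals $Q^*$. Using the representation $x(t)=\lim_{T_0\rightarrow\infty}\sum_{\tau=-T_0-1}^{t-1}(A+BK_0)^{t-\tau-1}(w(\tau)+Bv(\tau))$ together with the independence and centering of the noise sequences, all cross terms between distinct time indices vanish, and $\mathbb{E}\{x(t)x(t)^\top\}$ collapses, after reindexing $s=t-\tau-1$, to $\sum_{s=0}^{\infty}(A+BK_0)^s(\sigma_w^2 I+\sigma_v^2 BB^\top)\big((A+BK_0)^s\big)^\top$. This quantity is manifestly independent of $t$ and, by exponential stability of the closed loop, is the convergent solution $Q^*$ of the stated Lyapunov equation. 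The same independence yields $\mathbb{E}\{x(t)v(t)^\top\}=0$, because $v(t)$ is independent of every noise term appearing in $x(t)$, and both are centered.

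Third, substituting $u(t)=K_0x(t)+v(t)$ and expanding the four blocks of the outer product, I would read off the top-left block $\mathbb{E}\{x(t)x(t)^\top\}=Q^*$, the off-diagonal blocks $\mathbb{E}\{x(t)(K_0x(t)+v(t))^\top\}=Q^*K_0^\top$ and its transpose $K_0Q^*$ (the cross terms with $v(t)$ vanishing by the previous step), and the bottom-right block $\mathbb{E}\{(K_0x(t)+v(t))(K_0x(t)+v(t))^\top\}=K_0Q^*K_0^\top+\sigma_v^2 I$. Assembling these blocks recovers precisely the definition of $M^*$, which completes the argument after inserting back into the average.

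The main obstacle is the second step, namely carefully justifying both that the stationary covariance of $x(t)$ is genuinely time-invariant (so that the average over $t$ is trivial) and that it coincides with the Lyapunov solution $Q^*$; this is where exponential stability is essential to guarantee convergence of the infinite sum. The block-algebra of the third step and the decomposition of the first step are otherwise entirely routine.
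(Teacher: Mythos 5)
Your proof is correct, and it follows the same outer skeleton as the paper's: decompose $M=\frac{1}{T}X^\top X$ into a time average of per-step second moments, reduce to the block identities $\mathbb{E}\{x(t)x(t)^\top\}=Q^*$, $\mathbb{E}\{u(t)x(t)^\top\}=K_0Q^*$, $\mathbb{E}\{u(t)u(t)^\top\}=K_0Q^*K_0^\top+\sigma_v^2I$, and assemble $M^*$. Where you diverge is in how the per-step covariance is shown to equal $Q^*$ for every $t$. The paper argues by induction along the trajectory: it invokes the standing assumption $\mathbb{E}\{x(0)x(0)^\top\}=Q^*$ and then propagates through the dynamics, using the Lyapunov equation
\begin{equation}\nonumber
(A+BK_0)Q^*(A+BK_0)^\top - Q^* + \sigma_w^2 I + \sigma_v^2 BB^\top = 0
\end{equation}
as a fixed-point identity to conclude $\mathbb{E}\{x(t+1)x(t+1)^\top\}=Q^*$ whenever $\mathbb{E}\{x(t)x(t)^\top\}=Q^*$. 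You instead expand the moving-average representation of $x(t)$, kill the cross terms by independence, and evaluate the resulting series $\sum_{s\geq 0}(A+BK_0)^s\left(\sigma_w^2 I+\sigma_v^2 BB^\top\right)\left((A+BK_0)^s\right)^\top$, identifying it with $Q^*$ as the convergent series solution of the same Lyapunov equation. The paper's induction is shorter because it takes the stationarity of $x(0)$ (and its covariance being $Q^*$) as given from the problem setup; your series computation is more self-contained, since it re-derives that the stationary covariance is the Lyapunov solution and makes the time-invariance of $\mathbb{E}\{x(t)x(t)^\top\}$ explicit without an inductive step, at the cost of handling the infinite sum, whose convergence is exactly where exponential stability enters.
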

\begin{proof}
	Define 
	\begin{align}
	&X_1 = \begin{bmatrix}
	x(0) & \dots & x(T-1)
	\end{bmatrix}\nonumber\\
	& X_2 = \begin{bmatrix}
	Kx(0)+v(0) & \dots & Kx(T-1)+v(T-1)
	\end{bmatrix}\nonumber
	\end{align}
	The theorem can be proven by showing 
	\begin{align}
	&\frac{1}{T}\mathbb{E}(X_1X_1^T) = Q^*,\nonumber\\
	&\frac{1}{T}\mathbb{E}(X_2X_1^T) = KQ^*,\nonumber\\
	&\frac{1}{T}\mathbb{E}(X_2X_2^T) = KQ^*K^T+\sigma_v^2 I,\nonumber\\
	\end{align}
	In what follows, we show the validity of the first equality. The other equalities can be proven in a similar manner. We have
	\begin{align}
	\frac{1}{T}\mathbb{E}(X_1X_1^T) = \frac{1}{T}\sum_{\tau=0}^{T-1}\mathbb{E}(x(\tau)x(\tau)^T)
	\end{align}
	Furthermore, notice that $x(0)$ has a stationary distribution and hence, $\mathbb{E}(x(0)x(0)^T) = Q^*$. Furthermore,
	\begin{align}
	\mathbb{E}(x(1)x(1)^T) &= (A+BK)Q^*(A+BK)^T+\sigma_w^2 I +\sigma_v^2BB^T = Q^*
	\end{align}
	where the second inequality is due to~\eqref{lyap}. Similarly, one can show that $\mathbb{E}(x(\tau)x(\tau)^T) = Q^*$ for every $\tau\in\{2,3,\dots,T-1\}$ and hence,
	\begin{align}
	\frac{1}{T}\mathbb{E}(X_1X_1^T) = \frac{1}{T}\sum_{\tau=0}^{T-1}Q^* = Q^*
	\end{align}
	This completes the proof.
\end{proof}

\noindent{\it Proof of Lemma~\ref{l2_app}:} 
	Due to Lemma~\ref{l_expected_M} and upon taking $T_0\rightarrow\infty$, we have 
	\begin{equation}
	\mathbb{P}\{|M_{ij}-M^*_{ij}|>\epsilon\} = \mathbb{P}\{|M_{ij}-\mathbb{E}(M_{ij})|>\epsilon\}
	\end{equation}
	and hence, it suffices to obtain a bound for $\mathbb{P}\{|M_{ij}-\mathbb{E}(M_{ij})|>\epsilon\}$. We should consider four cases:
	\begin{itemize}
		\item[-] If $i,j\in\{1,2,\dots, n\}$, then $M_{ij} = \frac{1}{T}z^TR_M(1,2,1,2)z$,
		where $z\in\mathbb{R}^{(n+m)(T+T_0+1)}$ is a random vector with independent zero-mean sub-Gaussian elements and $\|z\|_{\psi}\leq 1$.
		\item[-] If $i\in\{1,2,\dots, n\}$ and $j\in\{n+1,n+2,\dots, n+m\}$, then $M_{ij} = \frac{1}{T}z^TR_M(1,2,3,4)z$,
		where $z\in\mathbb{R}^{(n+m)(T+T_0+1)}$ is a random vector with independent zero-mean sub-Gaussian elements and $\|z\|_{\psi}\leq 1$.
		\item[-] If $i\in\{n+1,n+2,\dots, n+m\}$ and $j\in\{1,2,\dots, n\}$, then $M_{ij} = \frac{1}{T}z^TR_M(3,4,1,2)z$,
		where $z\in\mathbb{R}^{(n+m)(T+T_0+1)}$ is a random vector with independent zero-mean sub-Gaussian elements and $\|z\|_{\psi}\leq 1$.
		\item[-] If $i\in\{n+1,n+2,\dots, n+m\}$ and $j\in\{n+1,n+2,\dots, n+m\}$, then $M_{ij} = \frac{1}{T}z^TR^{(4)}_M(3,4,3,4)z$,
		where $z\in\mathbb{R}^{(n+m)(T+T_0+1)}$ is a random vector with independent zero-mean sub-Gaussian elements and $\|z\|_{\psi}\leq 1$.
	\end{itemize}
	Invoking the Hanson-Wright inequality and Lemma~\ref{l_bound_M} for the aforementioned cases completes the proof.\qed
	
	\subsection{The proof of Proposition~\ref{prop_recovery}}\label{pr_correct}
	We need the following lemma:
	\begin{lemma}\label{l_M_eig}
		We have
		\begin{align}
		\|M^*\| \leq \frac{85C^2\eta^2}{1-\rho}
		\end{align}
	\end{lemma}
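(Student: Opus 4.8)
The plan is to exploit the fact that $M^*$ is precisely the covariance matrix of the zero-mean random vector $\xi = \begin{bmatrix} x(0)^\top & (K_0 x(0) + v(0))^\top \end{bmatrix}^\top$, so that, since $M^*$ is symmetric positive semidefinite, its spectral norm admits the variational characterization $\|M^*\| = \sup_{\|a\|_2 = 1} a^\top M^* a = \sup_{\|a\|_2 = 1}\mathbb{E}\{(a^\top \xi)^2\}$. Writing $a = \begin{bmatrix} a_1^\top & a_2^\top\end{bmatrix}^\top$ with $a_1 \in \mathbb{R}^n$ and $a_2 \in \mathbb{R}^m$, I would substitute $a^\top \xi = (a_1 + K_0^\top a_2)^\top x(0) + a_2^\top v(0)$ and plug in the stationary series representation $x(0) = \sum_{s=0}^\infty (A+BK_0)^s (w(-s-1) + B v(-s-1))$, obtained by reindexing (with $s = -\tau-1$) the limit already displayed in the paper.

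Since the noises $\{w(-s-1), v(-s-1)\}_{s\ge 0}$ together with $v(0)$ are independent and centered, every cross term vanishes and $\mathbb{E}\{(a^\top \xi)^2\}$ collapses to a single sum over $s$ of the variances of $(a_1 + K_0^\top a_2)^\top (A+BK_0)^s(w(-s-1)+Bv(-s-1))$, plus the extra term $\mathbb{E}\{(a_2^\top v(0))^2\}$. Using the sub-Gaussian norm definition, each coordinate of $w$ and $v$ has variance at most $2\eta^2$, so with $c = a_1 + K_0^\top a_2$ every summand is controlled by $2\eta^2 \left(\|c^\top(A+BK_0)^s\|_2^2 + \|c^\top(A+BK_0)^sB\|_2^2\right)$ and the last term by $2\eta^2\|a_2\|_2^2$.

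The crucial step, and the one place where care is required, is to bound $\|c^\top(A+BK_0)^s\|_2$ and $\|c^\top(A+BK_0)^sB\|_2$ without incurring a spurious factor of $C$. The plan is to never bound $\vertiii{K_0}$ and $\vertiii{(A+BK_0)^s}$ separately; instead, I would split $c^\top(A+BK_0)^s = a_1^\top(A+BK_0)^s + a_2^\top K_0(A+BK_0)^s$ and invoke the four \emph{combined} operator-norm bounds $\vertiii{(A+BK_0)^s}, \vertiii{(A+BK_0)^sB}, \vertiii{K_0(A+BK_0)^s}, \vertiii{K_0(A+BK_0)^sB} \le C\rho^s$ that the paper explicitly records. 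This yields $\|c^\top(A+BK_0)^s\|_2 \le C\rho^s(\|a_1\|_2+\|a_2\|_2)$, and the same bound with the trailing $B$. Bounding $\vertiii{K_0}\cdot\vertiii{(A+BK_0)^s} \le C^2\rho^s$ instead would degrade the final estimate to order $C^4$, so keeping the products intact is exactly what delivers the advertised $C^2$ dependence — this is the conceptual heart of the argument rather than any hard computation.

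Finally, I would sum the geometric series using $\sum_{s\ge0}\rho^{2s} = 1/(1-\rho^2)$ and $(\|a_1\|_2+\|a_2\|_2)^2 \le 2(\|a_1\|_2^2+\|a_2\|_2^2) = 2$ to obtain $\mathbb{E}\{(a^\top\xi)^2\} \le \frac{8C^2\eta^2}{1-\rho^2} + 2\eta^2$ uniformly over unit vectors $a$. Using $\frac{1}{1-\rho^2} \le \frac{1}{1-\rho}$, together with $C \ge 1$ and $0\le\rho<1$ to absorb the additive $2\eta^2$ into $\frac{2C^2\eta^2}{1-\rho}$, gives $\|M^*\| \le \frac{10C^2\eta^2}{1-\rho}$, comfortably within the stated constant $85$ (the slack merely reflects the crude constants one can afford, e.g.\ if one routes the bound through the individual blocks of $M^*$ and the triangle inequality rather than through the quadratic form). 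The main obstacle is therefore organizational: recognizing the covariance/quadratic-form identity and channeling every occurrence of $K_0$ through the combined bounds so as to avoid the extra factor of $C$.
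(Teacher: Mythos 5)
Your proposal is correct and follows essentially the same route as the paper: both expand the stationary covariance $M^*$ as the series generated by the noise history and channel every occurrence of $K_0$ through the combined bounds $\vertiii{(A+BK_0)^\tau}, \vertiii{(A+BK_0)^\tau B}, \vertiii{K_0(A+BK_0)^\tau}, \vertiii{K_0(A+BK_0)^\tau B}\le C\rho^\tau$ before summing the geometric series. The only differences are presentational --- you scalarize via the quadratic form $\sup_{\|a\|_2=1}\mathbb{E}\{(a^\top\xi)^2\}$ and bound the noise variances by $2\eta^2$ directly from the sub-Gaussian norm definition, whereas the paper bounds the spectral norm of the block outer-product series via the triangle inequality and uses $\sigma_w,\sigma_v\le\sqrt{5}\eta$, which is why your constant ($10$) comes out smaller than the paper's ($85$).
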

	\begin{proof}
		One can easily verify that 
		\begin{align}
		Q^* = \sum_{\tau=0}^{\infty}\begin{bmatrix}
		\sigma_w(A+BK_0)^\tau & \sigma_v(A+BK_0)^\tau B
		\end{bmatrix}\begin{bmatrix}
		\sigma_w(A+BK_0)^\tau & \sigma_v(A+BK_0)^\tau B
		\end{bmatrix}^T
		\end{align}
		and hence
		\begin{align}
		M^* =&\begin{bmatrix}
		0 & 0\\
		0 & \sigma_v^2I
		\end{bmatrix}\nonumber\\
		&\hspace{-10mm}+ \sum_{\tau=0}^{\infty}\begin{bmatrix}
		\sigma_w(A+BK_0)^\tau & \sigma_v(A+BK_0)^\tau B\\
		\sigma_wK_0(A+BK_0)^\tau & \sigma_vK_0(A+BK_0)^\tau B
		\end{bmatrix}\begin{bmatrix}
		\sigma_w(A+BK_0)^\tau & \sigma_v(A+BK_0)^\tau B\\
		\sigma_wK_0(A+BK_0)^\tau & \sigma_vK_0(A+BK_0)^\tau B
		\end{bmatrix}^T
		\end{align}
		
		Therefore, with the assumption $\sigma_w,\sigma_w\leq 1$ and the fact that $\sigma_u,\sigma_v\leq \sqrt{5}\eta$ (the proof of which is simple and can be found, e.g., in~\cite{rivasplata2012subgaussian}), one can write
		\begin{align}
		\vertiii{M^*}&\leq 5\eta^2+5\eta^2\sum_{\tau=0}^{\infty}\vertiii{\begin{bmatrix}
		(A+BK_0)^\tau & (A+BK_0)^\tau B\\
		K_0(A+BK_0)^\tau & K_0(A+BK_0)^\tau B
		\end{bmatrix}}^2\nonumber\\
		&\leq 5\eta^2+5\eta^2\sum_{\tau=0}^{\infty}(\vertiii{(A+BK_0)^\tau}+\vertiii{K_0(A+BK_0)^\tau B}+\vertiii{K_0(A+BK_0)^\tau}\nonumber\\
		&\hspace{3cm}+\vertiii{(A+BK_0)^\tau B})^2\nonumber\\
		&\leq 5\eta^2+80\eta^2\sum_{\tau=0}^{\infty}C^2\rho^{2\tau}\nonumber\\
		&\leq \frac{85C^2\eta^2}{1-\rho}
		\end{align}
		This completes the proof.
	\end{proof}
Based on this lemma, we will take a similar approach to the proof of Theorem 3.1 in~\cite{pereira2010learning} to prove the correct sparsity recovery of the system matrices.

\vspace{2mm}
\noindent{\it Proof of Proposition~\ref{prop_recovery}:}
To prove this proposition, we need to show that the conditions of Lemma~\ref{l_sparse} holds with high probability. To ensure that the first condition on $G$ implies the second one, it suffices to have
\begin{equation}
\frac{\lambda\gamma}{3}\leq \frac{\Psi_{\min}C_{\min}}{4k}-\lambda
\end{equation}
Noting that $0<\gamma<1$, one can verify that the following bound on $\lambda$ is enough to guarantee that the above inequality holds:
\begin{equation}
\lambda\leq\frac{\Psi_{\min}C_{\min}}{8k}
\end{equation}
Furthermore, to ensure the last two conditions on $M$, it suffices to have 
\begin{equation}
\vertiii{M_{:\mA_{j}}-M^*_{:\mA_{j}}}_{\infty}\leq\frac{\gamma C_{\min}}{12\sqrt{k}}
\end{equation}
Based on the above analysis, it suffices to have
\begin{subequations}
	\begin{align}
	&\mathbb{P}\left(\|G\|_{\infty}>\frac{\gamma\lambda}{3}\right)\leq\frac{\delta}{2}\label{con1}\\
	&\mathbb{P}\left(\vertiii{M_{:\mA_{j}}-M^*_{:\mA_{j}}}_{\infty}>\frac{\gamma C_{\min}}{12\sqrt{k}}\right)\leq\frac{\delta}{2}\label{con2}
	\end{align}
\end{subequations}
in order to ensure the exact recovery with probability of at least $1-\delta$. 
First, we derive conditions under which~\eqref{con1} holds. Based on Lemma~\ref{l1_app}, one needs to ensure the following inequalities
\begin{subequations}
	\begin{align}
	& 2(n+m)\exp\left(-c_4\cdot\frac{(1-\rho)^2}{C^2\eta^4}\frac{\gamma^2\lambda^2}{9}T\right)\leq\frac{\delta}{2}\label{con3}\\
	& \lambda\leq\frac{\Psi_{\min}C_{\min}}{8k}\label{con4}\\
	&\frac{\gamma\lambda}{3}\leq\frac{3C\eta^2}{1-\rho}\label{con5}
	\end{align}
\end{subequations}
where~\eqref{con5} is a technical condition that is required by Lemma~\ref{l1_app}. It can be easily verified that~\eqref{con3} is satisfied with the choice of
\begin{equation}
\lambda = \sqrt{\frac{9}{c_4}\cdot\frac{C^2\eta^4}{\gamma^2T(1-\rho)^2}\log(4(n+m)/\delta)}
\end{equation}
Based on the chosen value for $\lambda$ and in order to satisfy~\eqref{con4}, we should have the following lower bound on $T$
\begin{equation}\label{lowerT1}
T\geq\frac{576}{c_4}\cdot\frac{C^2\eta^4k^2}{\Psi^2_{\min}C^2_{\min}\gamma^2(1-\rho)^2}\log(4(n+m)/\delta)
\end{equation}
Similarly, to ensure the validity of~\eqref{con5}, we should have
\begin{equation}\label{lowerT2}
T\geq \frac{1}{c_4}\cdot\log(4(n+m)/\delta)
\end{equation}
Now, we will derive the conditions under which~\eqref{con2} is satisfied using Lemma~\ref{l2_app}. To this goal, first we need to show that the following condition is satisfied:
\begin{subequations}
	\begin{align}
	& 0<\epsilon<\frac{4C^2\eta^2}{(1-\rho)^2}\label{con7}
	\end{align}
\end{subequations}
which is reduced to
\begin{equation}
\frac{\gamma C_{\min}}{12\sqrt{k}}<\frac{4C^2\eta^2}{(1-\rho)^2}k
\end{equation}
with the choice of $\epsilon = \frac{\gamma C_{\min}}{12\sqrt{k}}$. However, the above inequality implies that
\begin{equation}
k^{3/2}> \frac{1}{48}\frac{\gamma C_{\min}(1-\rho)^2}{C^2\eta^2}
\end{equation} 
A sufficient condition for the correctness of the above inequality is to have $k\geq 2$. To see this, note that
\begin{equation}\label{Cmin}
C_{\min}\leq \lambda_{\min}(M^*_{\mA_{j},\mA_{j}})\leq\lambda_{\max}(M^*)\leq \frac{85C^2\eta^2}{1-\rho}
\end{equation}
where the last inequality is due to Lemma~\ref{l_M_eig}. Therefore, 
\begin{equation}
\frac{1}{48}\frac{\gamma C_{\min}(1-\rho)^2}{C^2\eta^2}\leq \frac{85}{48}<2
\end{equation}
which implies $k\geq 2$. Finally, to verify~\eqref{con2} and according to Lemma~\ref{l2_app}, it suffices to have
\begin{equation}
2(n+m)k\exp\left(-{c}_5\cdot\frac{(1-\rho)^4}{C^4\eta^4}\frac{\gamma^2 C^2_{\min}}{144 k}T\right)\leq\frac{\delta}{2}
\end{equation}
This implies that
\begin{equation}\label{lowerT4}
T\geq \frac{144}{{c}_5}\cdot\frac{C^4\eta^4k}{(1-\rho)^4\gamma^2C^2_{\min}}\log(4(n+m)k/\delta)
\end{equation}
Based on the above analysis, the inequalities ~\eqref{lowerT1},~\eqref{lowerT2}, and~\eqref{lowerT4} impose lower bounds on $T$. Comparing these inequalities with~\eqref{lower_T}, one can verify that the latter dominates all of them. This completes the proof.\qed

\subsection{Proof of Lemma~\ref{detbound_app}}\label{pr_detbound_app}
To prove this lemma, first we introduce the KKT conditions for~\eqref{lsj}.
\begin{lemma}[KKT conditions]
	$\hat{\Psi}_{:,j}$ is an optimal solution for~\eqref{lsj} if and only if it satisfies
	\begin{equation}
	M(\hat{\Psi}_{:,j}-{\Psi}^*_{:,j})-G+\lambda S = 0
	\end{equation}
	for some $S\in\partial \|\hat{\Psi}_{:,j}\|_1$, where $\partial \|\hat{\Psi}_{:,j}\|_1$ is the sub-differential of $\|\cdot\|_1$ at $\hat{\Psi}_{:,j}$.
\end{lemma}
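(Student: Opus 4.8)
The plan is to obtain the stated identity as the first-order optimality condition for the convex program~\eqref{lsj}. The objective $f(\Psi_{:,j}) = \frac{1}{2T}L(\Psi_{:,j}) + \lambda\|\Psi_{:,j}\|_1$ is the sum of a convex quadratic and a (convex) norm, hence convex, and it is finite on all of $\mathbb{R}^{n+m}$. Consequently, $\hat{\Psi}_{:,j}$ is a global minimizer \emph{if and only if} $0 \in \partial f(\hat{\Psi}_{:,j})$, and it is precisely this equivalence---rather than a mere necessary condition---that yields the ``if and only if'' in the lemma. Since this is the one place where convexity is essential, I would state it explicitly at the outset.

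Next I would split the subdifferential. Because the smooth term $\frac{1}{2T}L(\cdot)$ is differentiable everywhere and both summands are proper convex, the Moreau--Rockafellar sum rule applies without any constraint qualification, giving
\[
\partial f(\hat{\Psi}_{:,j}) = \nabla\!\left(\tfrac{1}{2T}L\right)\!(\hat{\Psi}_{:,j}) + \lambda\,\partial\|\hat{\Psi}_{:,j}\|_1 .
\]
Thus $0 \in \partial f(\hat{\Psi}_{:,j})$ is equivalent to the existence of some $S \in \partial\|\hat{\Psi}_{:,j}\|_1$ satisfying $\nabla(\tfrac{1}{2T}L)(\hat{\Psi}_{:,j}) + \lambda S = 0$.

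It then remains to rewrite the gradient in terms of $M$ and $G$. Writing $\frac{1}{2T}L(\Psi_{:,j}) = \frac{1}{2T}\|Y_{:,j}-X\Psi_{:,j}\|_2^2$, its gradient at $\hat{\Psi}_{:,j}$ equals $\frac{1}{T}X^\top(X\hat{\Psi}_{:,j}-Y_{:,j})$. Substituting the model equation $Y_{:,j} = X\Psi^*_{:,j}+W_{:,j}$ and recalling the definitions $M = \frac{1}{T}X^\top X$ and $G = \frac{1}{T}X^\top W_{:,j}$ gives
\[
\nabla\!\left(\tfrac{1}{2T}L\right)\!(\hat{\Psi}_{:,j}) = M(\hat{\Psi}_{:,j}-\Psi^*_{:,j}) - G .
\]
Plugging this into the stationarity condition produces $M(\hat{\Psi}_{:,j}-\Psi^*_{:,j}) - G + \lambda S = 0$, which is exactly the claim.

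I do not expect any serious obstacle here: the argument is a textbook application of convex first-order optimality. The only points that warrant care are (i) invoking convexity so that the condition is both necessary and sufficient, and (ii) confirming that the subdifferential sum rule applies---which it does trivially, since the quadratic term is everywhere finite and differentiable. I would not need to unpack $\partial\|\cdot\|_1$ explicitly (i.e.\ $S_i = \mathrm{sign}((\hat{\Psi}_{:,j})_i)$ on the support and $|S_i|\le 1$ off it), as the lemma is already phrased directly in terms of the subgradient $S$.
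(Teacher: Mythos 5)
Your proof is correct, and it is exactly the standard argument the paper has in mind: the paper itself omits the proof of this lemma entirely (``The proof is trivial and is omitted for brevity''), so your write-up simply supplies the canonical convex first-order optimality reasoning---global optimality iff $0\in\partial f(\hat{\Psi}_{:,j})$, the Moreau--Rockafellar sum rule, and the gradient computation via $Y_{:,j}=X\Psi^*_{:,j}+W_{:,j}$. The only things worth noting are cosmetic: you correctly read $Y$ as $Y_{:,j}$ in the column-wise problem~\eqref{lsj}, and your gradient calculation implicitly fixes the paper's constant-factor slip in defining $G=-\nabla L(\Psi_{:,j})|_{\Psi_{:,j}=\Psi^*_{:,j}}$ (which matches $\frac{1}{T}X^\top W_{:,j}$ only when $L$ carries the $\frac{1}{2T}$ normalization).
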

\begin{proof}
	The proof is trivial and is omitted for brevity.
\end{proof}
The following lemma is an immediate consequence of the KKT conditions.
\begin{lemma}\label{l22}
	Assuming that~\eqref{lsj} recovers the correct sparsity pattern of $\Psi^*_{:,j}$, the following equalities hold for $E = \hat{\Psi}_{:,j}-\Psi^*_{:,j}$:
	\begin{align}
	& E_{\mAj^c} = 0\\
	& E_{\mAj} = (M_{\mAj, \mAj})^{-1}G_{\mAj}-\lambda(M_{\mAj, \mAj})^{-1}S_{\mAj}
	\end{align}
\end{lemma}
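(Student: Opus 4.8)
The plan is to read off both equalities directly from the KKT stationarity condition $M(\hat\Psi_{:,j}-\Psi^*_{:,j})-G+\lambda S=0$ stated above; the argument is essentially bookkeeping on the support, the only genuine content being the invertibility of the active block $M_{\mAj\mAj}$.

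First I would establish $E_{\mAj^c}=0$. Since~\eqref{lsj} recovers the correct sparsity pattern of $\Psi^*_{:,j}$ by assumption, $\hat\Psi_{:,j}$ is supported on $\mAj$, so $\hat\Psi_{i,j}=0$ for every $i\in\mAj^c$; by the definition of $\mAj$ we likewise have $\Psi^*_{i,j}=0$ on $\mAj^c$. Hence $E_i=\hat\Psi_{i,j}-\Psi^*_{i,j}=0$ for $i\in\mAj^c$, which is the first equality. Next I would restrict the stationarity condition to the rows indexed by $\mAj$. Expanding the matrix-vector product block-wise gives $(ME)_{\mAj}=M_{\mAj\mAj}E_{\mAj}+M_{\mAj\mAj^c}E_{\mAj^c}$, and the second term vanishes by the first part. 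The restricted equation therefore reads $M_{\mAj\mAj}E_{\mAj}-G_{\mAj}+\lambda S_{\mAj}=0$, and solving for $E_{\mAj}$ yields exactly $E_{\mAj}=(M_{\mAj\mAj})^{-1}G_{\mAj}-\lambda(M_{\mAj\mAj})^{-1}S_{\mAj}$.

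The one step requiring care---and the main obstacle---is justifying that $M_{\mAj\mAj}$ is invertible, since this is not implied by sparsity recovery alone. I would obtain it from the hypothesis~\eqref{cmin2} of the enveloping Lemma~\ref{detbound_app} together with assumption A3. Writing $\Delta=M_{\mAj\mAj}-M^*_{\mAj\mAj}$ and factoring $M_{\mAj\mAj}=M^*_{\mAj\mAj}\bigl(I+(M^*_{\mAj\mAj})^{-1}\Delta\bigr)$, assumption A3 gives $\vertiii{(M^*_{\mAj\mAj})^{-1}}_\infty\leq D_{\max}$ while~\eqref{cmin2} gives $\vertiii{\Delta}_\infty\leq\frac{\min\{1,2\eta^2\}}{2D_{\max}}\leq\frac{1}{2D_{\max}}$, so $\vertiii{(M^*_{\mAj\mAj})^{-1}\Delta}_\infty\leq\tfrac12<1$. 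A Neumann-series argument then certifies invertibility, so the display above is well defined. I would also record the companion perturbation bound $\vertiii{(M_{\mAj\mAj})^{-1}}_\infty\leq D_{\max}+2D^2_{\max}\vertiii{\Delta}_\infty$, obtained from $(M_{\mAj\mAj})^{-1}-(M^*_{\mAj\mAj})^{-1}=-(M_{\mAj\mAj})^{-1}\Delta(M^*_{\mAj\mAj})^{-1}$, since combined with $\|S_{\mAj}\|_\infty\leq1$ it is precisely what turns the closed form of Lemma~\ref{l22} into the estimate~\eqref{err} of Lemma~\ref{detbound_app}.
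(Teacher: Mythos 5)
Your proof is correct and takes essentially the same route as the paper's: use correct sparsity recovery to get $E_{\mAj^c}=0$, restrict the KKT stationarity condition $ME-G+\lambda S=0$ to the rows indexed by $\mAj$, and solve the resulting equation $M_{\mAj\mAj}E_{\mAj}-G_{\mAj}+\lambda S_{\mAj}=0$ for $E_{\mAj}$. Your extra step---certifying invertibility of $M_{\mAj\mAj}$ via the Neumann-series bound $\vertiii{(M^*_{\mAj\mAj})^{-1}(M_{\mAj\mAj}-M^*_{\mAj\mAj})}_\infty\leq\tfrac12<1$ from Assumption A3 and~\eqref{cmin2}---is a valid and worthwhile addition, since the paper's proof simply writes $(M_{\mAj\mAj})^{-1}$ without justifying that the inverse exists.
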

\begin{proof}
	Due to the correct sparsity recovery, we have $E_{\mAj^c} = 0$. This, together with the KKT conditions imply that
	\begin{equation}
	M_{\mAj \mAj}E_{\mAj}-G_{\mAj}+\lambda S_{\mAj} = 0
	\end{equation}
	Solving the above equation with respect to $E_{\mAj}$ will conclude the proof.
\end{proof}

\vspace{2mm}
\noindent{\it Proof of Lemma~\ref{detbound_app}:} Based on Lemma~\ref{l22}, one can write
\begin{equation}
\|E_{\mAj}\|_{\infty}\leq \underbrace{\left\|(M_{\mAj  \mAj})^{-1}G_{\mAj}\right\|_{\infty}}_{Z_1}+\underbrace{\lambda\left\|(M_{\mAj \mAj})^{-1}S_{\mAj}\right\|_{\infty}}_{Z_2}
\end{equation}
In what follows, we will provide a bound for each term in the above inequality. For $Z_2$, one can write
\begin{align}\label{Z2}
Z_2&\leq \lambda\left\|\left((M_{\mAj, \mAj})^{-1}-(M^*_{\mAj, \mAj})^{-1}\right)S_{\mAj}\right\|_{\infty}+\lambda\left\|(M^*_{\mAj, \mAj})^{-1}S_{\mAj}\right\|_{\infty}\nonumber\\
&\leq\lambda\left(\vertiii{(M_{\mAj, \mAj})^{-1}-(M^*_{\mAj, \mAj})^{-1}}_{\infty}+\vertiii{(M^*_{\mAj, \mAj})^{-1}}_{\infty}\right)\nonumber\\
& \leq\lambda\left(\underbrace{\vertiii{(Q_{\mAj, \mAj})^{-1}-(M^*_{\mAj, \mAj})^{-1}}_{\infty}}_{\Delta}+D_{\max}\right)
\end{align}
On the other hand, we have
\begin{align}
(M_{\mAj, \mAj})^{-1} =& (M^*_{\mAj, \mAj})^{-1}\!-\!(M^*_{\mAj, \mAj})^{-1}\left(M_{\mAj, \mAj}\!-\!M^*_{\mAj, \mAj}\right)\!(M_{\mAj, \mAj})^{-1}\nonumber\\
=&(M^*_{\mAj, \mAj})^{-1}\nonumber\\
&\!-\!(M^*_{\mAj, \mAj})^{-1}\!\left(M_{\mAj, \mAj}\!-\!M^*_{\mAj, \mAj}\right)\!\left((M^*_{\mAj, \mAj})^{-1}\!+\!\left((M_{\mAj, \mAj})^{-1}\!-\!(M^*_{\mAj, \mAj})^{-1}\right)\right)
\end{align}
and therefore
\begin{align}
\Delta&\leq \vertiii{(M_{\mAj, \mAj})^{-1}}_\infty\vertiii{M_{\mAj, \mAj}-M^*_{\mAj, \mAj}}_\infty\left(\vertiii{(M^*_{\mAj, \mAj})^{-1}}_\infty+\Delta\right)
\end{align}
This leads to
\begin{align}
\Delta&\leq \frac{D^2_{\max}}{1-D_{\max}\vertiii{M_{\mAj, \mAj}-M^*_{\mAj, \mAj}}_\infty}\vertiii{Q_{\mAj, \mAj}-M^*_{\mAj, \mAj}}_\infty\nonumber\\
&\leq \frac{D^2_{\max}}{1-\min\{1/2,\eta^2\}}\vertiii{M_{\mAj, \mAj}-M^*_{\mAj, \mAj}}_\infty\nonumber\\
&\leq {2D^2_{\max}}\vertiii{M_{\mAj, \mAj}-M^*_{\mAj, \mAj}}_\infty
\end{align}
where the last inequality is due to the assumption~\eqref{cmin2}. Combining the above inequality with~\eqref{Z2} gives rise to
\begin{equation}\label{Z2_final}
Z_2\leq \lambda\left( 2D^2_{\max}\vertiii{M_{\mAj, \mAj}-M^*_{\mAj, \mAj}}_\infty+D_{\max}\right)
\end{equation}
Now we will bound $Z_1$. 
Similar to $Z_2$, we have
\begin{align}\label{Z1_final}
Z_1&\leq \left(\vertiii{(M_{\mAj, \mAj})^{-1}-(M^*_{\mAj, \mAj})^{-1}}_{\infty}+\vertiii{(M^*_{\mAj, \mAj})^{-1}}_{\infty}\right)\|G_{\mAj}\|_{\infty}\nonumber\\
&\leq \left(\Delta+\vertiii{(M^*_{\mAj, \mAj})^{-1}}_\infty\right)\|G_{\mAj}\|_{\infty}\nonumber\\
&\leq \left(2D^2_{\max}\vertiii{M_{\mAj, \mAj}-M^*_{\mAj, \mAj}}_\infty+D_{\max}\right)\|G_{\mAj}\|_{\infty}
\end{align}
Putting together~\eqref{Z1_final} and~\eqref{Z2_final} completes the proof.\qed

\end{document}